\providecommand{\U}[1]{\protect\rule{.1in}{.1in}}
\providecommand{\U}[1]{\protect\rule{.1in}{.1in}}
\newcommand{\C}{{\mathbb C}}
\newcommand{\R}{{\mathbb R}}
\newcommand{\Z}{{\mathbb Z}}
\newcommand{\Pc}{{\mathcal P}}
\newcommand{\Hc}{{\mathcal H}}
\newcommand{\Lc}{{\mathcal L}}
\def\rpp0{\rho_{\Pc_0}}
\def\rpp1{\rho_{\Pc_1}}
\newcommand{\ba}{\begin{eqnarray}}
\newcommand{\ea}{\end{eqnarray}}
\newcommand{\bas}{\begin{eqnarray*}}
\newcommand{\eas}{\end{eqnarray*}}
\newcommand{\be}{\begin{equation}}
\newcommand{\ee}{\end{equation}}
\newcommand{\bi}{\begin{itemize}}
\newcommand{\ei}{\end{itemize}}
\newcommand{\wh}{\widehat}
\newtheorem{theorem}{Theorem}
\newtheorem{proposition}[theorem]{Proposition}
\newtheorem{lemma}[theorem]{Lemma}
\newenvironment{proof}[1][Proof]{\noindent\textbf{#1.} }{\ \rule{0.5em}{0.5em}}
\newtheorem{preremark}[theorem]{Remark}
\newenvironment{remark}{\begin{preremark}\rm}{\hfill$\Diamond$\end{preremark}}
\newtheorem{prenotation}[theorem]{Notation}
\numberwithin{equation}{section}
\numberwithin{theorem}{section}
\begin{document}

\title{{Segal-Bargmann transforms from hyperbolic Hamiltonians}}
\author{William D. Kirwin\thanks{Center for Mathematical Analysis, Geometry and Dynamical Systems, Instituto Superior T\'ecnico, University of Lisbon, will.kirwin@gmail.com}, Jos\'e  Mour\~ao, Jo\~ao P. Nunes\thanks{Department of Mathematics and Center for Mathematical Analysis, Geometry and Dynamical Systems, Instituto Superior T\'ecnico, University of Lisbon, jmourao \& jpnunes@math.tecnico.ulisboa.pt} \, and Thomas Thiemann\thanks{Lehrstuhl f\"ur Theoretische Physik III, FAU Erlangen-N\"urnberg, thomas.thiemann@gravity.fau.de }}
\maketitle

\date

\begin{abstract}
We consider the imaginary time flow of a quadratic hyperbolic Hamiltonian on the symplectic plane, apply it
to the Schr\"odinger polarization and study the corresponding evolution of polarized sections.
The flow is periodic in imaginary time and the evolution of polarized sections has 
interesting features. On the
time intervals for which the polarization is real or K\"ahler, the half--form 
corrected time evolution of polarized sections is given by unitary operators which turn out to be  equivalent to
the classical Segal-Bargmann transforms (which are usually associated to the quadratic elliptic 
Hamiltonian $H=\frac12 p^2$ and to the heat operator).  At the right endpoint of these intervals, the evolution of polarized sections is given by 
the Fourier transform from the Schr\"odinger to the momentum representation.
In the complementary intervals of imaginary time, the polarizations are anti--K\"ahler and the Hilbert space of polarized sections collapses to
 $\Hc = \{0\}$. 

Hyperbolic quadratic Hamiltonians thus give rise to a new factorization of the Segal-Bargmann transform, which is 
very different from the usual one, where one first applies a 
bounded contraction operator (the heat kernel operator), mapping $L^2$--states to real analytic functions with 
unique analytic continuation, and then one applies analytic continuation. 
In the factorization induced by an hyperbolic complexifier,
both factors are unbounded operators but their composition is, 
in the K\"ahler or real sectors, unitary. 

In another paper \cite{KMNT}, we explore the application of the above family of unitary transforms to  the definition of new holomorphic fractional Fourier transforms.
\end{abstract}

\tableofcontents

\section{Introduction}
\label{s1}

The study of equivalence of the quantizations of a symplectic manifold for different choices of polarization
is very interesting both from the physical and the mathematical points of view.
For example, the unitarizability of the Knizhnik-Zamolodchikov-Hitchin connection on bundles of conformal 
blocks continues to be a mathematical challenge (for  
recent progress, see \cite{AnGa}). These studies have close 
links with the ``complexification'' of the group of symplectomorphisms and the corresponding
geodesics on the space of K\"ahler metrics, see for example \cite{Do99a, Do99b, BLU, MN15}, play an important role in 
recent developments. 

We will be interested in the study of the unitarity of maps
associated with the imaginary time Hamiltonian flow of a quadratic hyperbolic Hamiltonian function 
(the ``complexifier'') on the 
symplectic plane,  applied to the Schr\"odinger quantization. The idea of complexifier 
Hamiltonian function was proposed in \cite{Th96} (see also \cite{Th07, GT07a, GT07b}). In geometric 
quantization, in several families of examples, 
the analytic continuation of Hamiltonian flows from real to imaginary time allows one to connect real and 
K\"ahler (or mixed) polarizations. The corresponding time evolution of polarized sections can be described by 
a map, called Kostant-Souriau-Heisenberg (KSH) map in \cite{KMN14}. To define it,  one
chooses a triple, $(H, \Pc_0, \wh {H})$, where $H$ is the complexifier Hamiltonian function, 
$\Pc_0$ is the starting polarization (the Schr\"odinger, or vertical, polarization
in the present case) and $\widehat H$ is a representation of $H$ on the initial Hilbert space 
of polarized sections $\Hc_{\Pc_0}$. 

Let $X_H$ denote the Hamiltonian vector field of $H$ and, for $\tau = it, t\in \R$, 
denote by $\Pc_{\tau}$ the imaginary time evolution of the 
initial polarization, 
$$\Pc_{\tau} = \exp(\tau \Lc_{X_H}) \, \Pc_0.$$ 
(We will show in Section \ref{s3} that for the cases examined in this paper $\Pc_{\tau}$ is well-defined. For more general cases, see \cite{MN15}.)
Given the above triple $(H, \Pc_0, \wh {H})$, 
$\exp(-t \wh H)$ is the quantization of the complexified symplectomorphism $\exp (-\tau X_H)$ on $\Hc_{\Pc_0}$ (we are chosing $\hbar=1$).
Let $\rho (H)$ be the Kostant-Souriau prequantum operator associated to $H$ (see Section \ref{s2}.) Then, 
the KSH transform at time $-it$  is given by
\be
\label{1.ksh}
U_{\tau} = \exp(+t \rho (H)) \circ \exp(- t\wh H) \, : \,    \Hc_{\Pc_0} \longrightarrow \Hc_{\Pc_{\tau}} ,
\ee
where the action of $\exp(-t \wh H)$ on  $\Hc_{\Pc_0}$ is followed by the prequantization, $\exp(+t \rho (H))$, 
of  the complex symplectomorphism, $\exp (it X_H)$, mapping $\Hc_{\Pc_0}$ to $\Hc_{\Pc_{\tau}}$.
Note that, as recalled below, the classical Segal--Bargaman transform can be described exactly in this way.  

The study of adapted complex structures on cotangent bundles of Riemannian manifolds (\cite{GS91, LS})
as the result of imaginary time evolution under the geodesic flow was initiated
in \cite{HK11} and extended to other situations in \cite{HK15, KMN13, KMN16}.
The interpretation of the factorization of the Hall coherent state transforms 
\cite{Ha94, Ha02, Dr}, which generalize the Segal--Bargman transform to the case of more general Lie groups, as the quantization of a imaginary time symplectomorphism 
followed by the prequantization of its inverse  was proposed in \cite{KMN13, KMN14}.

In the present paper, we study the imaginary time flow of an hyperbolic Hamiltonian applied
to the Schr\"odinger polarization on $\R^2$ and the corresponding KSH
transforms. The analytic continuation to imaginary time of the Hamiltonian flow is periodic. 
Therefore, the KSH map is also periodic in imaginary time. On the
time intervals for which the polarization is real or K\"ahler, 
the KSH maps will be unitary and will actually turn out to be Segal-Bargmann transforms, disguised by a new presentation as the composition of two (unbounded) factors. 
Just before the time at which the polarization becomes anti--K\"ahler, the KSH map 
is given by the Fourier transform from the Schr\"odinger to the momentum representation. In the anti-K\"ahler region,
the Hilbert space trivializes to $\Hc = \{0\}$. 

As we will see, the new factorization
of the unitary Segal-Bargmann transforms is very different
from the one achieved by the ``traditional" coherent state transforms, 
in which one first applies a bounded contraction operator, namely the heat kernel operator,
mapping $L^2$--states to real analytic functions with 
unique analytic continuation, and then one applies analytic continuation \cite{Ha06}. 
In the factorization induced by an hyperbolic complexifier
both factors are unbounded operators but their composition is unitary in the K\"ahler or real sectors.

\section{Geometric quantization and coherent state transforms}
\label{s2}

In the present section, we will briefly review relevant aspects of geometric quantization of
$(\R^2, \omega= dx \wedge dp)$ with trivial
prequantum line bundle $ L = T^* \R \times \C$, so that the non-zero constant function is a
global
trivializing section. (For a comprehensive reference on geometric quantization see, for example, \cite{Wo}.) Let the connection on $L$ be defined by $\nabla 1 = i p dx $
and the compatible Hermitian structure by $h(1, 1)=1$.

The prequantization of the observable $f \in C^\infty(\R^2)$ is given by
\ba
\label{2.pqo} \nonumber
\rho(f) &=&  \left(i \nabla_{X_f} + f\right) \otimes 1  + i \otimes \Lc_{X_f}  = i X_f  \otimes 1 - 
L_f \otimes 1 + i \otimes \Lc_{X_f} =  \\
&=&  i X_f \otimes 1- \left(p \frac{\partial f}{\partial p} - f\right) \otimes 1   + i \otimes \Lc_{X_f}, 
\ea
where $\Lc_{X_f}$ denotes the Lie derivative acting on the half-form correction.

The Schr\"odinger representation corresponds to choosing the vertical polarization
$$ 
\Pc_{Sch}=  \langle X_x  \rangle_\C =  \langle\frac{\partial}{\partial p}\rangle_\C  \, ,
$$
for which the quantum Hilbert space is
\be
\label{ee-srp}
\Hc_{\Pc_{Sch}} = L^2(\R, d  x) \otimes \sqrt{d x } ,
\ee
with inner product normalized as follows,\footnote{The factor of $\sqrt{\pi}$ is necessary to ensure 
a unitary correspondence between the Schr\"odinger and the K\"ahler polarizations to be studied below. 
This factor also appears in \cite{FMMN}.}
\be
\label{1.ip}
\langle\psi_1,  \psi_2\rangle = \sqrt{\pi} \, \int_\R \, \overline {\psi_1(x)} \, \psi_2(x) \, dx .
\ee
Let $H$ denote a real-analytic Hamiltonian function on $\R^2$. (In this paper, we will take $H$ to be an homogeneous polynomial of degree two.) 
$H$ generates complexified canonical transformations and a family of quantizations of $(\R^2,\omega)$, as follows.
Using the push-forward by the corresponding Hamiltonian flow analytically continued to complex time $\tau$, define
\be
\label{1.polit}
\Pc_{\tau} = \exp(\tau \Lc_{X_H}) \,  \Pc_{Sch}. 
\ee

Let
$$
\mathcal T=\{\tau\in \C: \Pc_{\tau} \, \mbox{is a K\"ahler polarization}\}.
$$
That is, for $\tau\in\mathcal T$, there exists a complex structure $J_\tau$ such that 
$(\R^2,\omega,J_\tau)$ is K\"ahler, with
$$
\Pc_{\tau} = T^{(0,1)}(\R^2,J_\tau).
$$
(Note that for general $H$, $\mathcal T$ could be empty. However, as we will show, for the quadratic Hamiltonians we will consider $\mathcal T$ is open and has $0$ in its boundary.)
The corresponding Hilbert space of polarized quantum states 
$\Hc_{\Pc_{\tau}}$ is
\be
\label{1.hsit}
\Hc_{\Pc_{\tau}} = \overline{\left\{\psi \in C^\infty(\R^2) \otimes \sqrt{dw} \, : \, 
\nabla_{X_w} \otimes 1  \, \left(\psi \otimes \sqrt{dw }\right)  = 0,   \,   || \psi || < \infty \right\}},
\ee
where $w=\exp(\tau X_H) \, x$ is a $J_{\tau}$--holomorphic coordinate, 
the bar denotes norm completion 
and the inner product reads
\be
\label{1.ip2}
\langle\psi_1, \psi_2\rangle = \int_{\R^2} \overline{\psi_1 (x, p)}  \, \psi_2 (x, p) \, \sqrt{\frac{dw \wedge \overline{dw}}{(-2i) dx \wedge dp}}\, dx dp   \, .
\ee
One then has a ``quantum" bundle of Hilbert spaces $\mathcal H \to \mathcal T$, with fiber $\Hc_{\Pc_{\tau}}$ over $\tau \in \mathcal T$.
In the following, we will study the unitary equivalence of the elements of this family of quantizations by ``lifting" the action of the complexified Hamiltonian flow of $H$ on the space of polarizations, given in (\ref{1.polit}), to $\mathcal H$. This will achieved by giving KSH maps, as described in the Introduction, 
$$
U_\tau: \Hc_{\Pc_{0}}=\mathcal H_{{\mathcal P}_{Sch}}\to \Hc_{\Pc_{\tau}},
$$ 
which will be shown to be unitary isomorphisms for $\tau \in \mathcal T$.\footnote{Actually, in what follows we will, for simplicity, consider only the case $\tau =it, t\in \R.$ This does not bring any essential loss of generality since for real time the KSH map is explicitly unitary.} We can think of $\Hc_{\Pc_{0}}=\mathcal H_{{\mathcal P}_{Sch}}$ as being the fiber over an extension of $\mathcal H$ to $\mathcal T \cup\{0\}$. 

The study of K\"ahler polarizations of cotangent bundles $T^*M$ through complexified Hamiltonian flows has, by now, been explored in a number of works \cite{HK11,KMN13,KMN14,MN15,EMN}. In the case of the cotangent bundle of a Lie group of compact type, such as $\R^2\cong T^*\R$ in the present paper, it is known that quantizations in K\"ahler polarizations generated by the complexified Hamiltonian flow of the energy functional 
$$
H_E(x,Y)=\frac12 \vert\vert Y\vert\vert^2, \, x\in M, Y\in T^*_xM,
$$ 
are related by coherent state transforms of Segal-Bargmann-Hall \cite{Ha94,Dr,Ha02,FMMN,KW,KMN13,KMN14}. In fact, the classical Segal-Bargmann transform corresponds to the unitary isomorphism 
\begin{eqnarray}\nonumber
L^2(\R,dx) &\stackrel{\cong}{\to}& {\mathcal HL^2}(\C,d\nu_t)\\ \nonumber
f &\mapsto & {\mathcal C}\circ e^{\frac{\Delta}{2}} f,
\end{eqnarray}
where $\Delta$ is the Laplacian for the Euclidean metric on $\R^2$, ${\mathcal C}$ denotes analytic continuation and ${\mathcal HL^2}(\C,d\nu_t)$ is the Hilbert space of holomorphic functions on $\C$ which are $L^2$ with respect to the ``averaged heat kernel measure'' $d\nu_t$ \cite{Ha94,Dr}. Note that $L^2(\R,dx)\cong {\mathcal H}_{{\mathcal P}_{Sch}}$ and that ${\mathcal H}L^2(\R^2,d\nu_t)\cong {\mathcal H}_{{\mathcal P}_\tau}$ \cite{Ha94,FMMN}.
In \cite{KMN13,KMN14}, this factorization of the Segal-Bargmann transform, and of a large family of its generalizations, was shown to correspond to a KSH map, $U_\tau$, as in (\ref{1.ksh}), arising naturally in geometric quantization without the need to introduce by hand the heat kernel measure,
$$
SB_t = U_\tau =  e^{t\rho(H_E)} \circ e^{-t\hat H_E},
$$
where $\tau = it$ and $\hat H_E= \frac{\Delta}{2}$ is the quantization of $H_E$ in the Schr\"odinger polarization. 

In this case, the positive-definite quadratic Hamiltonian $H_E$ is associated by Schr\"odinger quantization to the elliptic operator $\hat H_E$. The norm contraction given by the heat kernel operator is exaclty compensated by the norm expansion associated to the analytic continuation and $U_\tau$ is unitary. For more general choices of $H$ this will hold at most asymptotically in the limit $t\to +\infty$ \cite{KMN13,KMN14}.

In the present paper, we will examine the case when $H_E$ is replaced by an hyperbolic (indefinite) quadratic Hamiltonian function. As we will show, for some ranges of complex time $\tau$, the resulting KSH map will be unitary even though it is given by the composition of two unbounded operators. This, in particular, shows that the condition of $\exp (-t\hat H_E)$ being a contraction semigroup is not a necessary condition for the unitarity of the KSH map.

\section{Quadratic complexifiers and the Schr\"odinger polarization}\label{s3}
Consider the quadratic hyperbolic Hamiltonian 
\be
\label{3.ham}
H = \frac 12 \left(H_{11} p^2 + 2 H_{12} px + H_{22}x^2\right),
\ee
where $\det({\rm Hess}(H)) =  H_{11}H_{22} - H^2_{12}<0$. Let $\det({\rm Hess}(H))=-\alpha^2$, with $\alpha >0.$
The corresponding Hamiltonian vector field is
\be
\label{3.hvf}
X_H = H_{11} \, p \partial / \partial x -   H_{22} \,  x \partial / \partial p +   H_{12}  \, x \partial / \partial x
-  H_{12}  \, p \partial / \partial p . 
\ee
%
%
%

We obtain the following result for the imaginary time flow of the Schr\"odinger polarization.
\begin{proposition}
\label{3.ppol}  Let $\tau = it, t\in \R$ and
 $H$ be given by (\ref{3.ham}). The evolution of the Schr\"odinger polarization in imaginary time $\tau$ gives the 
polarization $$\Pc_{\tau} = \exp(\tau \Lc_{X_H}) \, \Pc_{Sch}$$
which is


\be
\begin{array}{lclclc}
\label{3.pol2c}
\hbox{{\rm a)\, anti--K\"ahler}}  &{\rm  if} &  H_{11} \sin(2\alpha t) &<&0, \\
\hbox{{\rm b)\, Schr\"odinger}}  &{\rm  if}  &  H_{11} \sin( \alpha t) &=&0, \\
\hbox{{\rm c)\, K\"ahler}}  &{\rm  if} &  H_{11} \sin(2 \alpha t) &>&0, \\
\hbox{{\rm d)\, Real polarization}} \, \langle X_{ H_{12} x + H_{11} p}  \rangle &{\rm  if}  &  H_{11} \cos( \alpha t) &=&0 \, .
\end{array}
\ee
\end{proposition}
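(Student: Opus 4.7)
The plan is to exploit the fact that the Hamiltonian vector field $X_H$ in (\ref{3.hvf}) is linear in the coordinates $(x,p)$, so it corresponds to the constant matrix
$$A \;=\; \begin{pmatrix} H_{12} & H_{11} \\ -H_{22} & -H_{12} \end{pmatrix},$$
which has $\mathrm{tr}(A)=0$ and $\det(A) = \det(\mathrm{Hess}(H)) = -\alpha^2$. By Cayley--Hamilton this gives $A^2 = \alpha^2 I$, so the matrix exponential collapses into a trigonometric identity; for $\tau = it$ one obtains $e^{\tau A} = \cos(\alpha t)\,I + i\frac{\sin(\alpha t)}{\alpha}\,A$. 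Because the real flow of $X_H$ is itself given by $e^{tA}$ acting linearly on $\R^2$, its analytic continuation to imaginary time is the globally defined $\C$-linear map $e^{\tau A}$, which shows along the way that $\Pc_\tau$ is well defined for every $\tau \in i\R$.

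Next I would compute $\Lc_{X_H}(\partial/\partial p) = [X_H, \partial/\partial p]$ directly from (\ref{3.hvf}); on constant vector fields the Lie derivative acts as the matrix $-A$. Exponentiating and evaluating at $(0,1)^T$ yields the generator of $\Pc_\tau$,
$$Y_\tau \;=\; \exp(\tau\Lc_{X_H})\,\frac{\partial}{\partial p} \;=\; -i\,\frac{H_{11}\sin(\alpha t)}{\alpha}\,\frac{\partial}{\partial x} + \left(\cos(\alpha t) + i\,\frac{H_{12}\sin(\alpha t)}{\alpha}\right)\frac{\partial}{\partial p}.$$
Writing $Y_\tau = a\,\partial/\partial x + b\,\partial/\partial p$ (so $a$ is purely imaginary), a short computation with $\omega = dx \wedge dp$ yields $\omega(Y_\tau,\bar Y_\tau) = a\bar b - \bar a b = 2a\,\Re(b)$, and therefore
$$i\,\omega(Y_\tau,\bar Y_\tau) \;=\; \frac{H_{11}\sin(2\alpha t)}{\alpha}.$$

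The classification then becomes a matter of reading off signs and vanishings. Positivity (resp.\ negativity) of $i\,\omega(Y,\bar Y)$ is equivalent to positive- (resp.\ negative-)definiteness of the associated symmetric form $g(\cdot,\cdot)=\omega(\cdot,J\cdot)$, giving the K\"ahler and anti-K\"ahler cases (c) and (a). The polarization coincides with $\Pc_{Sch}$ iff $Y_\tau \propto \partial/\partial p$, iff $a = 0$, iff $H_{11}\sin(\alpha t)=0$, giving (b). Finally, $\Pc_\tau = \langle H_{11}\partial/\partial x - H_{12}\partial/\partial p\rangle_\C = \langle X_{H_{12}x + H_{11}p}\rangle_\C$ iff $aH_{12} + bH_{11} = 0$; substituting the explicit expressions for $a$ and $b$ the imaginary parts cancel and this quantity simplifies to $H_{11}\cos(\alpha t)$, so it vanishes precisely under condition (d).

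The only genuine obstacle is sign bookkeeping: one must pin down the sign of $\Lc_{X_H}$ relative to the push-forward by the time-$\tau$ flow, and fix the convention for which of $Y_\tau,\bar Y_\tau$ spans the $(0,1)$ part, since the latter choice flips the sign in the K\"ahler criterion. A useful consistency check is the degenerate case $H_{11}=0$: then $[X_H, \partial/\partial p]$ is itself proportional to $\partial/\partial p$, so the flow preserves $\Pc_{Sch}$ exactly, and cases (b) and (d) correctly collapse to Schr\"odinger, since $X_{H_{12}x} = -H_{12}\partial/\partial p$ spans $\Pc_{Sch}$.
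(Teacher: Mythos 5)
Your proof is correct and follows essentially the same route as the paper: compute the complexified linear flow explicitly (your Cayley--Hamilton step reproduces the paper's $w_\tau=\exp(\tau X_H)(x)$), identify the generator of $\Pc_\tau$, and read off the classification from the sign of a pairing. Your criterion $i\,\omega(Y_\tau,\bar Y_\tau)=H_{11}\sin(2\alpha t)/\alpha$ is the same quantity the paper computes as $dw\wedge\overline{dw}/(2i)$ (since $\omega(X_w,X_{\bar w})\,dx\wedge dp=dw\wedge\overline{dw}$), so the two arguments coincide up to notation.
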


\begin{proof}
We have $\Pc_{Sch} = \langle X_x\rangle_\C = \langle \partial / \partial p \rangle_\C$ and 
$$
\Pc_{\tau} = \exp(\tau \Lc_{X_H}) \Pc_{Sch} =  \exp(\tau \Lc_{X_H}) \langle X_x\rangle_\C  = 
\langle X_{\exp(\tau {X_H} ) (x)} \rangle_\C   .
$$
The result then follows directly from the analytic continuation of  the real time flow to imaginary time.
In real time $t\in \R$, we obtain
$$
x_t = \exp({t X_H}) (x) =
\left( \cosh(\alpha t) + H_{12} \frac{\sinh(\alpha t)}{\alpha} \right) \, x  +
 H_{11} \frac{\sinh(\alpha t)}{\alpha} \, p    \, ,
$$
and therefore in imaginary time, $\tau = it$, the function whose Hamiltonian vector field
generates $\Pc_{\tau}$ is
\be
\label{3.xit}
w = w_\tau  := x_{\tau} = \exp({\tau X_H}) (x) =
\left( \cos(\alpha t)  + i H_{12} \frac{\sin(\alpha t)}{\alpha} \right) \, x  + i
 H_{11} \frac{\sin(\alpha t)}{\alpha} \, p    \,  . 
\ee
We see that $w$ is a periodic function of $t$, with period $2 \pi / \alpha$, and  we have
$$
\frac {dw \wedge \overline{dw}}{2i}  = -\frac{H_{11}}{2\alpha} \, \sin(2 \alpha t) \, dx \wedge dp , 
$$
so that (\ref{3.pol2c}) follows. 
\end{proof}

We see that for $k\in \Z$ and $t\in (\frac{k\pi}{\alpha},\frac{k\pi}{\alpha}+\frac{\pi}{2\alpha})$, $\Pc_{\tau}$ is K\"ahler. $(\Pc_{\tau}$ is real iff $t=\frac{k\pi}{2\alpha}$.) The Proposition shows that, for appropriate intervals of $t$, namely when $\Pc_{\tau}$ is not real, $w_\tau$ defines a new holomorphic coordinate on $\R^2$ defining a new complex structure on it.

Since, from Proposition \ref{3.ppol}, the family of polarizations $\{{\mathcal P}_\tau\}, \tau =it, t\in \R$, is periodic in $t$ with period $\frac{\pi}{\alpha}$ we will consider the interval $t\in [0,\frac{\pi}{\alpha}]$ in the remainder of the paper.

\section{Unitarity of the KSH map}

\subsection{Hyperbolic case}\label{shc}

Here,  we will study  the unitarity of quantizations obtained by the action of the complexifier given by the hyperbolic Hamiltonian 
\be
\label{42.ham}
H = \frac 12 (p^2 -\alpha^2 x^2),    
\ee
where $\alpha>0.$ We are therefore setting $H_{11}=1, H_{12}=0$ and $H_{22}=-\alpha^2$. 
We will see below in Propositions \ref{4.prop} and \ref{4.prop2} that
the KSH map for
a more general hyperbolic Hamiltonian can be reduced to this one 
by a linear canonical transformation.

 The prequantization of (\ref{42.ham}) and its quantization 
in the Schr\"odinger representation are, respectively, 
\ba
\label{42.qhams} \nonumber 
\rho(H) &=& i \left(p  \frac{\partial}{\partial x} \ + \alpha^2 x  \frac{\partial}{\partial p}\right)  \otimes 1 \ - L_H   \otimes 1 + i \otimes 
\Lc_{X_H}\,  \\
\wh H &=& \frac 12 \left(- \frac{\partial^2}{\partial x^2} -\alpha^2 x^2\right) \otimes 1  
\ea
where
\be
L_H(p,x) =  \frac 12 \left(p^2  +\alpha^2 x^2 \right) .
\ee
We have from the proof of Proposition \ref{3.ppol}, with $\tau = it, t\in \R$, 
\be
\label{4.itf}
\left( 
\begin{array}{c}
p_{\tau}\\
x_{\tau}
\end{array}
\right) = 
\exp(\tau X_H) \,
 \left( 
\begin{array}{c}
p\\
x
\end{array}
\right) = 
S(\tau)
\left( 
\begin{array}{c}
p\\
x
\end{array}
\right) 
= 
\left( 
\begin{array}{rr}
\cos(\alpha t)& i\alpha \sin(\alpha t)\\
\frac{i}{\alpha} \sin(\alpha t) & \cos(\alpha t)
\end{array}
\right) \,
\left( 
\begin{array}{c}
p\\
x
\end{array}
\right) \, .
\ee

We see that, with the prequantum and quantum Hamiltonians
(\ref{42.qhams}),  and unlike the usual coherent state like transforms,  both factors in the KSH map (\ref{1.ksh})
are unbounded operators for any value of $t$. 
However, as we will see below, for ``half of the values'' in periodic imaginary time 
the ``directions of unboundedeness'' of the factors are exactly cancelled.

Recall from Proposition \ref{3.ppol} that $\Pc_{\tau}, \tau = it, t\in \R$ is periodic in $t$ with period $\pi/\alpha$, such that $\Pc_{0}$ is the Schr\"odinger polarization. For the complexifier (\ref{42.ham}) we obtain that $\Pc_{i\frac{\pi}{2\alpha}}$ is the momentum polarization. We will focus on the interval
$t\in (0,\frac{\pi}{2\alpha})$ for which ${\mathcal P}_\tau$ is a K\"ahler polarization and on the values $t=0,\frac{\pi}{2\alpha}$ for the Schr\"odinger and momentum real polarizations, respectively. For $t\in (\frac{\pi}{2\alpha},\frac{\pi}{\alpha})$, ${\mathcal P}_\tau$ is anti-K\"ahler and, as explained below, the corresponding Hilbert spaces of normalizable polarized states are trivial, ${\mathcal H}_{{\mathcal P}_\tau}=\{0\}.$ 

As in \cite{GS12, GKRS} it will be convenient to consider an overcomplete 
system of normalized (with respect to (\ref{1.ip})) Gaussian coherent states
\be
\label{42.gau}
\psi_Y(x) = {\pi}^{-\frac12} \, e^{-i P(x-Q) - \frac 12 (x-Q)^2} \otimes \sqrt{dx}   \,   ,  \qquad  Y=(P, Q) \in \R^2 .
\ee

Our main result in this Section is

\begin{theorem}
\label{42.tth}
For $\tau = it, t\in \R,$  with $t\in (0,\frac{\pi}{2\alpha})$, define the (half-form corrected) KSH transform 
\ba
\label{42.th1}
 U_{\tau} & : &  \, L^2(\R, dx) \otimes \sqrt{dx} \, \longrightarrow \,  \Hc_{\Pc_{\tau}}\\ \label{factori}
U_{\tau} &=& 
 \exp(+t \rho (H)) \circ \exp(- t\wh H)   = \,   \\   \nonumber 
 &=&  \exp \left(it \left(p  \frac{\partial}{\partial x} +\alpha^2 x  \frac{\partial}{\partial p}\right)  - t L_H(p, x)\right) 
\otimes \exp(it \Lc_{X_H})
\circ
\exp \left( \frac t2 \left( \frac{\partial^2}{\partial x^2} +\alpha^2 x^2\right) \right) \otimes 1.
\ea
The map $U_\tau$ has the following important properties:
\bi
\item[a)]  $U_\tau$ intertwines the unitary irreducible representations of the Heisenberg group
on $\Hc_{\Pc_{Sch}}$ and on $\Hc_{\Pc_{\tau}}$;
\item[b)] $U_\tau$ is a unitary isomorphism;
\item[c)] Let $Y = (P, Q) \in \R^2$. Then the $U_\tau$ transform of Gaussian coherent states reads
\ba
\nonumber {\scriptstyle \left(U_{\tau} \, \psi_Y \right) (p, x) = \pi^{-\frac12} \left| 
\frac{\sin (\theta) }{\sin(\theta+ \alpha t)}\right|^\frac12 \, e^{-\int_0^t \, \left(L_H(p_{is}, x_{is})  - 
L_H(P_{is}, Q_{is})\right) \, ds} \cdot} \\ \nonumber {\scriptstyle \cdot  
e^{-i P_{\tau}(w_\tau-Q_{\tau}) - \frac{b_\tau}{2} (w_\tau-Q_{\tau})^2}  \otimes \sqrt{dw_\tau}} = 
\ea
 \ba
\label{42.uit} 
&& {\scriptstyle =   
\pi^{-\frac12} \left| 
\frac{\sin (\theta) }{\sin(\theta+\alpha t)}\right|^\frac12 \,
e^{-i P(x-Q)}\cdot} \\ \nonumber
&& {\scriptstyle \cdot \exp\left(\frac{-\cos(\theta) \sin({\alpha t})(p-P)^2 -\sin(\theta) \cos({\alpha t})(x-Q)^2 
-2i\cos(\theta)\sin({\alpha t})(x-Q)(p-P)}{2\sin({\alpha t+\theta})}\right)}\\ \nonumber
&&{\scriptstyle \otimes \sqrt{dw_t},} \ea
where $p_{\tau}, x_{\tau}$ are given in (\ref{4.itf}), $w_\tau=x_{\tau}$ is
the $\Pc_{\tau}$--polarized coordinate (holomorphic for $t \neq \frac {k\pi}{2\alpha}, k \in \Z$), 
$P_{\tau}, Q_{\tau}$ are given by  (\ref{4.itf}) with $(p, x)$ replaced
by $(P, Q)$ and $b_\tau = \alpha\cot(\theta + \alpha t)$ with $\tan \theta = \alpha$.
\ei
\end{theorem}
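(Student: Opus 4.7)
The plan is to address parts (a), (c), and (b) roughly in that order. Part (a) is essentially a consequence of classical-quantum Egorov-type identities for linear observables; part (c) is the main concrete computation; and part (b) then follows from Stone--von Neumann together with the normalization check provided by (c).

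For part (a), I would use the standard prequantum intertwining identity
$$e^{t\rho(H)}\,\rho(f)\,e^{-t\rho(H)} \;=\; \rho\bigl(e^{tX_H}f\bigr),$$
which follows from $[\rho(H),\rho(f)]=i\rho(\{H,f\})$. Applied to the linear observables $f=x$ and $f=p$, the right-hand side is exactly the linear combination of $x,p$ dictated by (\ref{4.itf}). On the other side, the Schr\"odinger quantization satisfies the analogous identity $e^{-t\widehat H}\widehat x\,e^{t\widehat H} = \widehat{x_{-\tau}}$, etc., because the Heisenberg commutators $[\widehat H,\widehat x]=-i\widehat p$ and $[\widehat H,\widehat p]=-i\alpha^2\widehat x$ close on the linear subalgebra and coincide with the classical brackets $\{H,x\},\{H,p\}$. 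Composing the two identities in the order prescribed by (\ref{factori}) gives $U_\tau \widehat X = \widehat X_\tau\, U_\tau$ for every linear polynomial $X$ in $x,p$, which is the Heisenberg intertwining property.

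For part (c), I would compute $U_\tau\psi_Y$ by preserving the Gaussian form at each step. The inverted-oscillator heat operator $\exp\bigl(\tfrac{t}{2}(\partial_x^2+\alpha^2 x^2)\bigr)$ maps quadratic exponentials to quadratic exponentials; making the ansatz $e^{A_t(x-Q_t)^2+B_t(x-Q_t)+C_t}$ and matching coefficients yields a Riccati equation for $A_t$ and linear equations for $B_t,C_t,Q_t$. The substitution $\tan\theta=\alpha$ linearises the Riccati equation to trigonometric form, giving precisely $A_t = -\tfrac12 \alpha \cot(\theta+\alpha t) = -\tfrac12 b_\tau$, which is the source of the width parameter in (\ref{42.uit}). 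After this, the prequantum factor $e^{t\rho(H)}\otimes e^{it\mathcal L_{X_H}}$ acts by pullback along the imaginary-time flow (\ref{4.itf}) combined with the phase $\exp\bigl(-\int_0^t L_H(p_{is},x_{is})\,ds\bigr)$ and the half-form Jacobian $\sqrt{dw_\tau/dx}$; the modulus of this Jacobian supplies the prefactor $|\sin\theta/\sin(\theta+\alpha t)|^{1/2}$. Substituting (\ref{4.itf}) into the Gaussian from the first step and simplifying the resulting quadratic form in $(p-P,x-Q)$ yields (\ref{42.uit}). One should also verify that the output is $\Pc_\tau$-polarized, i.e.\ annihilated by $\nabla_{X_{w_\tau}}\otimes 1$, which is automatic from the fact that the flow carries the Schr\"odinger polarization to $\Pc_\tau$.

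For part (b), I would invoke Stone--von Neumann: both $\Hc_{\Pc_{Sch}}$ and $\Hc_{\Pc_\tau}$ carry irreducible unitary representations of the Heisenberg group with the same central character, so any nonzero intertwiner is a scalar multiple of a unitary. By (a), $U_\tau$ is such an intertwiner (nonzero by (c) applied to $Y=0$), hence $U_\tau = \lambda\cdot V$ with $V$ unitary. The scalar $\lambda$ is determined by computing $\|U_\tau\psi_0\|_{\Hc_{\Pc_\tau}}$ using (\ref{42.uit}), (\ref{1.ip2}) and the Gaussian integral, and matching it to $\|\psi_0\|=1$; the prefactor $|\sin\theta/\sin(\theta+\alpha t)|^{1/2}$ is precisely what is needed to make $|\lambda|=1$.

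The main obstacle is the calculation inside part (c). The subtle point is that $\exp(-t\widehat H)$ is genuinely unbounded on $L^2(\R,dx)$ (the inverted oscillator evolution in imaginary time amplifies high-frequency modes), so a priori the composition in (\ref{factori}) is only formal. The key miracle to be verified explicitly is that on the Gaussian coherent states (which form a total set) the subsequent prequantum expansion $e^{t\rho(H)}$ exactly compensates the blow-up coming from the Riccati parameter $A_t$, producing a normalizable element of $\Hc_{\Pc_\tau}$ precisely when $t\in(0,\pi/(2\alpha))$ (i.e.\ when $\Pc_\tau$ is K\"ahler and $b_\tau$ has positive real part). Tracking the phase and half-form bookkeeping through this cancellation is where all the work lies; once it is done on coherent states, density plus the Schur-type argument of (b) extends $U_\tau$ to a unitary on all of $L^2(\R,dx)\otimes\sqrt{dx}$.
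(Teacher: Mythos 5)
Your proposal follows essentially the same route as the paper: the paper likewise computes $\exp(-t\widehat H)\psi_Y$ by a Gaussian/Riccati ansatz in imaginary time (following Graefe--Schubert), with the same substitution $\tan\theta=\alpha$ producing $b_\tau=\alpha\cot(\theta+\alpha t)$, then applies the prequantum flow with its action phase and half-form Jacobian, and finally deduces unitarity from the Heisenberg intertwining property together with Stone--von Neumann and the normalization check $\|U_\tau\psi_Y\|=1$ on coherent states. The only cosmetic difference is that the paper establishes the intertwining by verifying the exponentiated relations directly on the explicit coherent-state images (\ref{42.uit}) rather than through the infinitesimal Egorov-type conjugation identities, which sidesteps domain questions for the unbounded generators but is otherwise the same idea.
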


\begin{remark}For  the  interval $t\in (\frac{\pi}{2\alpha},\frac{\pi}{\alpha})$, formula (\ref{42.uit}) still holds but the states on the right 
hand side  are non-normalizable.  In fact,
$\Hc_{\Pc_{\tau}} = \{0\}$  for all values $t \, :  (2k+1)\pi < 2\alpha t <(2k+2)\pi, \, k\in \Z$.
\end{remark}

\begin{remark}
Note that for the allowed interval for $t$ we always have $\theta\in (0,\frac{\pi}{2})$ and $\sin(\theta+ \alpha t)\neq 0.$
\end{remark}

For $t=\frac{\pi}{2\alpha}$ we obtain the usual relation between the quantization in the Schr\"odinger and momentum polarizations given by the Fourier transform.

\begin{theorem}
\label{42.tthreal}
For $\tau = i\frac{\pi}{2\alpha},$  the (half-form corrected) KSH transform 
\ba
\label{42.th1real}
\nonumber
 U_{i\frac{\pi}{2\alpha}} & : &  \, L^2(\R, dx) \otimes \sqrt{dx} \, \longrightarrow \,  \Hc_{{\mathcal P}_{i\frac{\pi}{2\alpha}}}\\
U_{i\frac{\pi}{2\alpha}} &=& 
 \exp(+\frac{\pi}{2\alpha} \rho (H)) \circ \exp(- \frac{\pi}{2\alpha}\wh H)   = \,   \\   \nonumber 
 &=&  \exp \left(i\frac{\pi}{2\alpha} \left(p  \frac{\partial}{\partial x} +\alpha^2 x  \frac{\partial}{\partial p}\right)  - \frac{\pi}{2\alpha} L_H(p, x)\right) 
\otimes \exp(i\frac{\pi}{2\alpha} \Lc_{X_H})
\circ
\exp \left( \frac{\pi}{4\alpha} \left( \frac{\partial^2}{\partial x^2} +\alpha^2 x^2\right) \right) \otimes 1.
\ea
coincides, up to a phase, with the Fourier transform. Namely, let $Y = (P, Q) \in \R^2$. Then the $U_{i\frac{\pi}{2\alpha}}$ transform of Gaussian coherent 
states reads
\be\label{42.uitreal} 
\left(U_{i\frac{\pi}{2\alpha}} \, \psi_Y \right) (p, x) =  
\sqrt{\frac{i}{\pi}}
e^{-i P(x-Q)}\cdot e^{-i (x-Q)(p-P)}\cdot e^{-\frac{(p-P)^2}{2}}\otimes \sqrt{dp}  = \sqrt{i}e^{-i px}{\mathcal F}(\psi_Y),
\ee
where 
\be\label{ft}
{\mathcal F}(\psi_Y) (p) = \frac{1}{\sqrt{2\pi}} \int_\R e^{ipx} {\pi}^{-\frac12}  e^{-i P(x-Q) - \frac 12 (x-Q)^2} dx \otimes \sqrt{dp}
\ee
is the Fourier transform. In particular, $U_{i\frac{\pi}{2\alpha}}: L^2(\R, dx) \otimes \sqrt{dx} \, \longrightarrow \,  \Hc_{{\mathcal P}_{i\frac{\pi}{2\alpha}}}$ is a unitary isomorphism that intertwines the usual position and momentum representations of the Heisenberg group.
\end{theorem}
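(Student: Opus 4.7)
The plan is to deduce this theorem from Theorem \ref{42.tth} by specializing the coherent-state formula (\ref{42.uit}) to the boundary value $t = \pi/(2\alpha)$ of the K\"ahler interval, and then comparing the resulting expression with a direct evaluation of the Fourier transform on Gaussians. Since the coherent states $\{\psi_Y\}_{Y\in\R^2}$ form a total family in $\Hc_{\Pc_{Sch}}$, establishing agreement on them will determine $U_{i\pi/(2\alpha)}$ completely, and then unitarity together with the Heisenberg intertwining will follow from the corresponding well-known properties of $\mathcal F$.

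First I will substitute $\alpha t = \pi/2$ in (\ref{42.uit}) and check that the right-hand side extends continuously from the open K\"ahler sector to this endpoint. The key observation is that $\sin(\theta+\alpha t) = \cos\theta \neq 0$ since $\theta = \arctan\alpha \in (0,\pi/2)$, so no denominator blows up. The amplitude simplifies to $\pi^{-1/2}\sqrt{|\tan\theta|} = \pi^{-1/2}\sqrt{\alpha}$; the $(x-Q)^2$ term in the Gaussian exponent is killed by $\cos(\alpha t)=0$; and from (\ref{4.itf}) the coordinate degenerates to $w_\tau = (i/\alpha)\, p$, so the half-form contributes $\sqrt{dw_\tau} = \sqrt{i/\alpha}\,\sqrt{dp}$. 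Combining these gives total prefactor $\sqrt{i/\pi}\,\sqrt{dp}$ and exponent $-\tfrac12(p-P)^2 - i(x-Q)(p-P)$, which is exactly the right-hand side of (\ref{42.uitreal}).

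Next I will evaluate the Fourier integral (\ref{ft}) by the substitution $u = x-Q$, which factors the integrand as $e^{ipQ}\, e^{i(p-P)u - u^2/2}$ and yields $\mathcal F(\psi_Y) = \pi^{-1/2}\, e^{ipQ}\, e^{-(p-P)^2/2}\otimes\sqrt{dp}$. The algebraic identity $e^{-iP(x-Q)}\,e^{-i(x-Q)(p-P)} = e^{-ip(x-Q)} = e^{-ipx}\,e^{ipQ}$ then lets me rewrite (\ref{42.uitreal}) as $\sqrt{i}\,e^{-ipx}\,\mathcal F(\psi_Y)$, completing the explicit coherent-state identification.

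For the global claims, I will use totality of the coherent states to extend the relation $U_{i\pi/(2\alpha)} = \sqrt{i}\,e^{-ipx}\,\mathcal F$ densely, appeal to unitarity of the standard Fourier transform from $L^2(\R,dx)$ to $L^2(\R,dp)$, and observe that multiplication by $e^{-ipx}$ is precisely the half-form trivialization solving the polarization equation $(\partial_x + ip)\psi = 0$ characterizing sections in $\Hc_{\Pc_{i\pi/(2\alpha)}}$, whence the Heisenberg intertwining reduces to the standard swap of the generators $\hat x,\hat p$ under $\mathcal F$. The main technical subtlety I anticipate is the half-form bookkeeping at the boundary: the density $\sqrt{dw\wedge\overline{dw}/(-2i\,dx\wedge dp)}$ appearing in (\ref{1.ip2}) vanishes at $t = \pi/(2\alpha)$ because the polarization becomes real, so I must justify taking the inner product on $\Hc_{\Pc_{i\pi/(2\alpha)}}$ as the limit from the K\"ahler interior (equivalently, as fibre integration along the momentum leaves), and verify that in this limit the normalization in (\ref{42.uitreal}) produces a genuine isometry onto $L^2(\R,dp)\otimes\sqrt{dp}$ rather than a merely formal one.
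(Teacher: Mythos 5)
Your proposal is correct and follows essentially the same route as the paper: specialize the coherent-state formula of Lemma \ref{prequantumWo} (equivalently (\ref{42.uit})) to $\alpha t=\pi/2$, evaluate the Gaussian Fourier integral directly, and identify $U_{i\pi/(2\alpha)}=\sqrt{i}\,e^{-ipx}\mathcal{F}$; your half-form and prefactor bookkeeping ($\sqrt{dw_\tau}=\sqrt{i/\alpha}\,\sqrt{dp}$, total constant $\sqrt{i/\pi}$) checks out, and your flagged subtlety about the inner product at the real endpoint is resolved exactly as in the paper's proof of Lemma \ref{prequantumWo}, namely by fibre integration along the leaves of the momentum polarization rather than a naive limit of the K\"ahler density. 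The only minor divergence is at the end: you deduce unitarity and intertwining from the known properties of $\mathcal{F}$, whereas the paper re-runs the Stone--von Neumann argument of Theorem \ref{42.tth}; both are valid.
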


\begin{remark}The phase $e^{-ipx}$ in the last term of (\ref{42.uitreal}) is expected and corresponds just to the usual change of gauge from the prequantum connection adapted to the 
Schr\"odinger polarization, which we are using, to a prequantum connection adapted to the momentum polarization.  
\end{remark}

\begin{remark}
The Fourier transform is well-known to correspond to the BKS pairing (which in this case coincides with the KSH map) between quantizations in 
the Schr\"odinger and momentum polarizations. Moreover, as shown in \cite{KW}, it arises from the usual familiy of Segal-Bargmann transforms, generated by the complexifier given by the Hamiltonian function $h(x,p)=\frac12 p^2$, in the limit $t\to +\infty$, that is when imaginary time $\tau \to +i\infty.$ It is remarkable, however, that in the present case of an hyperbolic complexifier Hamiltonian function, one obtains the Fourier transform from the KSH map {\it at finite imaginary time}.  This will be substantially clarified in 
Section \ref{2geod}.
\end{remark}

\begin{remark}\label{tttt}
Note that $U_{i\frac{\pi}{2\alpha}}$ in Theorem \ref{42.tthreal} is just the $t\to \frac{\pi}{2\alpha}$ limit of $U_{it}$ in Theorem \ref{42.tth}. Namely,
(\ref{ft}) coincides with (\ref{42.uit}) evaluated at $t=\frac{\pi}{2\alpha}$.
\end{remark}

We will prove these theorems with the help of two lemmas.
The operator $\exp(-t \wh H)$ corresponds to the Schr\"odinger evolution  
with respect to the non--Hermitian Hamiltonian
$-i \wh H$ in time $t$. The evolution of Gaussian coherent states 
 with respect to non-Hermitian quadratic Hamiltonians
was studied in \cite{GS12, GKRS}.

Let $V\subset C_\C^\infty(\R)\otimes \sqrt{dx}$ be the vector space of finite linear combinations of Gaussian coherent states of the type
$\psi_Y$ in (\ref{42.gau}).

\begin{lemma}
\label{42.lemma1}
Let $t\in \R.$
The operator $\exp(-t \wh H):V\to C_\C^\infty(\R^2)\otimes\sqrt{dx}$ is well defined and the image of the Gaussian coherent states (\ref{42.gau}) is given by
\ba
\label{42.whh} 
\left(\exp(-t \wh H) \, \psi_Y \right) (x) &=& \pi^{-\frac12} \left| 
\frac{\sin \theta }{\sin(\theta+ \alpha t)}\right|^\frac12 \,
\cdot \\ \nonumber && \cdot
e^{\int_0^t \,  L_H(P_{is}, Q_{is}) \, ds} \, 
e^{-i P_{\tau}(x-Q_{\tau})} \,  \, e^{-\frac{b_\tau}{2} (x-Q_{\tau})^2}    \otimes \sqrt{dx}, 
\ea
where, for $\tau = it, t\in \R$,
$$
\tan \theta = \alpha,\,\, b_\tau = \alpha \cot(\theta+\alpha t),
$$
and 
$$
\int_0^t \,  L_H(P_{is}, Q_{is}) \, ds =  \frac{1}{4\alpha} \sin(2\alpha t) 
\left(P^2+\alpha^2 Q^2\right) + i \sin^2(\alpha t) PQ.
$$
\end{lemma}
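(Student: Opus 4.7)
My strategy is to exploit the fact that $\widehat H = \tfrac12(-\partial_x^2 - \alpha^2 x^2)$ is quadratic in $x$ and $\partial_x$, so that the class of (possibly complex) Gaussian wavepackets is preserved by the formal propagator $\exp(-t\widehat H)$ — a standard fact for non-Hermitian quadratic Hamiltonians, cf.\ \cite{GS12, GKRS}. Accordingly, I would insert the time-dependent ansatz
\[
\psi(t,x) = C(t)\exp\!\bigl(-iP(t)(x-Q(t)) - \tfrac{b(t)}{2}(x-Q(t))^2\bigr)\otimes\sqrt{dx},
\]
with initial data $C(0)=\pi^{-1/2}$, $P(0)=P$, $Q(0)=Q$, $b(0)=1$, into the evolution equation $\partial_t\psi = -\widehat H\psi = \tfrac12\partial_x^2\psi + \tfrac{\alpha^2}{2}x^2\psi$ and match coefficients of $(x-Q)^0$, $(x-Q)^1$, $(x-Q)^2$ after expanding $x^2 = Q^2 + 2Q(x-Q) + (x-Q)^2$. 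This produces a triangular system of ODEs for $b$, then $(P,Q)$, then $C$.

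The $(x-Q)^2$ coefficient gives the Riccati equation $\dot b = -b^2 - \alpha^2$. Substituting $b = \alpha\tan\varphi$ reduces it to $\dot\varphi = -\alpha$; combined with $b(0)=1$, i.e.\ $\varphi(0) = \arctan(1/\alpha) = \pi/2 - \theta$ for $\tan\theta = \alpha$, this yields $b(t) = \alpha\cot(\theta + \alpha t) = b_\tau$. The linear-term matching is a single complex relation in $\dot P, \dot Q, P, Q$, underdetermined because the Gaussian ansatz has a one-parameter gauge freedom; fixing the gauge by requiring the phase-space centre to evolve under the classical flow gives the two ODEs $\dot Q = iP$ and $\dot P = i\alpha^2 Q$, which are precisely the Hamilton equations of $H$ analytically continued to imaginary time $\tau = it$. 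Hence $(P(t), Q(t)) = (P_\tau, Q_\tau)$ is given by the matrix in (\ref{4.itf}) applied to $(P,Q)$ rather than $(p,x)$, and a direct substitution verifies that the full linear equation is identically satisfied.

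Finally, the constant-term matching, after using $\dot Q_\tau = iP_\tau$ to cancel $-iP_\tau\dot Q_\tau$ against $P_\tau^2$, reduces to $\dot C/C = L_H(P_\tau, Q_\tau) - b_\tau/2$. Integration using $\tfrac12\int_0^t \alpha\cot(\theta + \alpha s)\,ds = \tfrac12\log\!\bigl|\sin(\theta+\alpha t)/\sin\theta\bigr|$ produces the prefactor $|\sin\theta/\sin(\theta+\alpha t)|^{1/2}$; together with $C(0) = \pi^{-1/2}$ and the exponential of $\int_0^t L_H(P_{is}, Q_{is})\,ds$, this gives the displayed formula. The closed form of the last integral then follows by expanding $P_\tau^2 + \alpha^2 Q_\tau^2$ using (\ref{4.itf}), applying the double-angle identities $\cos^2 - \sin^2 = \cos(2\alpha t)$ and $2\sin\cos = \sin(2\alpha t)$, and integrating. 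The only non-routine point is the gauge disentanglement in the linear equation, resolved by the classical-centre prescription; once this is adopted, the remaining steps are elementary ODE bookkeeping.
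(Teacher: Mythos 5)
Your proof is correct and follows essentially the same route as the paper: both insert the complex Gaussian ansatz of \cite{GS12} into $\partial_t\psi=-\widehat H\psi$ and reduce the problem to ODEs for the centre $(P_\tau,Q_\tau)$, the width $b_\tau$, and the prefactor, which are then integrated. You supply considerably more detail than the paper's sketch (coefficient matching, the gauge fixing of the centre, the explicit integrations), and your Riccati equation $\dot b=-b^2-\alpha^2$ carries the sign that is actually consistent with the stated solution $b_\tau=\alpha\cot(\theta+\alpha t)$, whereas the paper displays $\dot b_\tau=\alpha^2+b_\tau^2$.
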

\begin{proof}
By writing the initial value problem for
$$
\frac{d}{dt} \left(e^{-t\hat H}\psi_Y \right) = -\hat H \left(e^{-t\hat H}\psi_Y \right),
$$
and using as ansatz an expression as in equation (37) of \cite{GS12}, 
one obtains the expected evolution of the complex centers in complex time $\tau = it, t\in \R$,
$$
\dot P_\tau = i\alpha^2 Q_\tau,\,\,\, \dot Q_\tau = iP_\tau,
$$
and also 
$$
\dot b_\tau = \alpha^2 + b_\tau^2.
$$
The result then follows from (\ref{4.itf}) above. 
Equivalently, one can use equations (41) to (43) in \cite{GS12}. It is easy to verify that the 
(appropriately redifined) prefactor satisfies the equation (43) of 
\cite{GS12}  with the correct initial conditions and for complex time $\tau = it$. 
\end{proof}

Composing with the evolution via the prequantum operator $\rho(H)$ one obtains the following

\begin{lemma}\label{prequantumWo}
Let $\tau = it, t\in \R$, such that ${\mathcal P}_\tau$ is a K\"ahler or real polarization, that is 
$$
2\pi k \leq 2\alpha t \leq (2k+1)\pi, k\in \Z.
$$ 
Then, the KSH operator
$$
U_\tau:V\to {\mathcal H}_{{\mathcal P}_\tau}
$$
is well defined. Moreover, 
\begin{itemize}
\item[a)] The KSH--image of the Gaussian coherent states (\ref{42.gau}) is given by
\ba
\label{42.whh2} 
\left(U_{\tau}  \, \psi_Y \right) (x,p) &=&\\ \nonumber  &=& {\scriptstyle \pi^{-\frac12} \left| 
\frac{\sin (\theta) }{\sin(\theta+ \alpha t)}\right|^\frac12 \, 
e^{-\int_0^t \, \left(L_H(p_{is}, x_{is})  - L_H(P_{is}, Q_{is})\right) \, ds} \cdot} {\scriptstyle 
e^{-i P_{\tau}(w_\tau-Q_{\tau}) - \frac{b_\tau}{2} (w_\tau-Q_{\tau})^2}  \otimes \sqrt{dw_\tau}}  =\\ \nonumber
&=& {\scriptstyle \pi^{-\frac12} \left| 
\frac{\sin (\theta) }{\sin(\theta+ \alpha t)}\right|^\frac12 \,
e^{-i P(x-Q)}} \cdot 
 {\scriptstyle \exp\left({\frac{-\cos(\theta) \sin({\alpha t})(p-P)^2 -\sin(\theta) \cos({\alpha t})(x-Q)^2 
-2i\cos(\theta)\sin({\alpha t})(x-Q)(p-P)}{2\sin({\alpha t+\theta})}}\right)} \\ \nonumber
&&{\scriptstyle \otimes \sqrt{dw_t}}.
\ea

\item[b)] 
We have 
\be
\label{42.isom}
|| U_{\tau}  (\psi_Y) || = ||\psi_Y|| = 1, \qquad \forall Y =(P, Q) \in \R^2.
\ee
\end{itemize}
\end{lemma}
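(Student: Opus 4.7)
The plan is to derive (\ref{42.whh2}) by applying $\exp(t\rho(H))$ to the output of Lemma \ref{42.lemma1}, and then to verify (\ref{42.isom}) by a direct Gaussian integration against the K\"ahler measure (\ref{1.ip2}). Since the three summands in (\ref{42.qhams}) act on distinct tensor factors and the vector field $X_H$ is linear, the exponential of $t\rho(H)$ factors cleanly: on the function component, $\exp(t(iX_H-L_H))$ pulls the coordinate $x$ back along the imaginary-time flow to $w_\tau=x_\tau$ and multiplies by $\exp(-\int_0^t L_H(x_{is},p_{is})\,ds)$, while on the half-form component $\exp(it\Lc_{X_H})$ sends $\sqrt{dx}$ to $\sqrt{dw_\tau}$. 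Composing with Lemma \ref{42.lemma1} produces the first equality of (\ref{42.whh2}), with the two line-integrals collected into $\int_0^t[L_H(x_{is},p_{is})-L_H(P_{is},Q_{is})]\,ds$.

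For the second equality, I substitute $w_\tau-Q_\tau=\cos(\alpha t)(x-Q)+i\alpha^{-1}\sin(\alpha t)(p-P)$ from (\ref{4.itf}) into $(b_\tau/2)(w_\tau-Q_\tau)^2$ with $b_\tau=\alpha\cot(\theta+\alpha t)$, and I evaluate the remaining line-integral in the same closed form as in the last line of Lemma \ref{42.lemma1} but with $(P,Q)$ replaced by $(p,x)$. After expanding $p^2-P^2$, $x^2-Q^2$ and $px-PQ$ in the shifted variables $p-P$, $x-Q$ and combining exponents, I use $\tan\theta=\alpha$ (equivalently $\sin\theta\cos\theta=\alpha/(1+\alpha^2)$) together with the addition formula $\sin(\theta+\alpha t)=\sin\theta\cos(\alpha t)+\cos\theta\sin(\alpha t)$ to collapse the trigonometric coefficients into the explicit form displayed in (\ref{42.whh2}). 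The delicate point is the emergence of the mixed $(x-Q)(p-P)$ term with precisely the coefficient $-i\cos\theta\sin(\alpha t)/\sin(\theta+\alpha t)$, which only appears after the cross-terms coming from $(w_\tau-Q_\tau)^2$ and from $i\sin^2(\alpha t)\,px$ are fully combined; this bookkeeping is the main technical step and the step I expect to be most error-prone.

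For part (b), I use (\ref{1.ip2}) with the measure factor $\sqrt{dw_\tau\wedge\overline{dw_\tau}/((-2i)\,dx\wedge dp)}=\sqrt{\sin(2\alpha t)/(2\alpha)}$, which was computed in the proof of Proposition \ref{3.ppol} and is positive in the K\"ahler range. The modulus squared of the second expression in (\ref{42.whh2}) kills the pure phase $e^{-iP(x-Q)}$ and, after the translation $x\mapsto x+Q$, $p\mapsto p+P$, reduces the norm-squared to a centered two-dimensional real Gaussian in $(x,p)$. Completing the square in one variable and then integrating in the other yields a product of Gaussian normalizations whose trigonometric content, after another use of $\sin(\theta+\alpha t)=\sin\theta\cos(\alpha t)+\cos\theta\sin(\alpha t)$, exactly cancels the prefactor $\pi^{-1}|\sin\theta/\sin(\theta+\alpha t)|$ together with the measure factor, giving $\|U_\tau\psi_Y\|=1$. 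The boundary case $t=0$ is immediate, and the real polarization case $t=\pi/(2\alpha)$ is addressed separately under Theorem \ref{42.tthreal}, so need not be treated here.
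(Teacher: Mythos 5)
Your route is essentially the paper's: you factor $\exp(t\rho(H))$ as $\bigl(\exp(-\int_0^t L_H(p_{is},x_{is})\,ds)\circ\exp(\tau X_H)\bigr)\otimes\exp(\tau\Lc_{X_H})$, apply it to the output of Lemma \ref{42.lemma1}, and then verify part (b) by a centered Gaussian integral against (\ref{1.ip2}) using the measure factor $\sqrt{\sin(2\alpha t)/(2\alpha)}$ and the diagonal real part of the quadratic form; this is exactly the computation in the paper, and your identification of the cross-term bookkeeping as the delicate step in the second equality of (\ref{42.whh2}) is apt.

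There is one genuine (if small) gap: you dismiss the real-polarization endpoint $t=\pi/(2\alpha)$ by deferring it to Theorem \ref{42.tthreal}, but the proof of that theorem is itself derived from Lemma \ref{prequantumWo}, so the deferral is circular. The hypothesis $2\pi k\le 2\alpha t\le (2k+1)\pi$ explicitly includes the endpoints, and at $\cos(\alpha t)=0$ your norm computation as written degenerates, since $\sin(2\alpha t)=0$ kills the measure factor and one of the diagonal Gaussian widths simultaneously. The paper handles this case directly in two lines: for $\cos(\alpha t)=0$ the polarization is $\langle\partial/\partial x\rangle_{\C}$, the image state is a Gaussian in $p$ alone, and $\|U_\tau\psi_Y\|$ reduces to a convergent one-dimensional Gaussian integral in $dp$ that again gives $1$; for $\sin(\alpha t)=0$, $U_\tau$ is the identity. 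You should also record, as the paper does, the routine check that $U_\tau\psi_Y$ is covariantly constant along $\mathcal{P}_\tau$ (the condition is $\nabla_{X_{w_\tau}}$-flatness with respect to the connection $\nabla 1 = ip\,dx$, not merely dependence on $w_\tau$), since that is what makes $U_\tau\colon V\to\mathcal{H}_{\mathcal{P}_\tau}$ well defined.
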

\begin{proof}
We have,
$$
e^{t\rho(H)}\psi = e^{-\int_0^t L_H(p(is),x(is))ds}\cdot e^{itX_H}\psi.
$$
This follows equation 8.4.2 in \cite{Wo} where, for $t\in \R$,
$$
e^{-it\rho(H)}\psi = e^{i\int_0^t L_H(p(s),x(s))ds}\cdot e^{tX_H}\psi,
$$
which can also be proved straightforwardly. Analytic continuation to $\tau = it$ gives the result.
We have to apply $\exp(t \rho(H))$ to the rhs of  (\ref{42.whh}). From (\ref{4.itf}) we see that
$$
\exp(t \rho(H)) = \left(\exp\left(- \int_0^t \, L_H(p_{is}, x_{is}) \, ds\right) \circ  
\exp\left(\tau X_H \right)\right)
\otimes \exp \left(\tau \Lc_{X_H}\right) \, .
$$
Applying this operator to  (\ref{42.whh}) we obtain (\ref{42.whh2}). It is straightforward to verify that $U_\tau(\psi_Y)$ is covariantly constant along 
${\mathcal P}_\tau$.

Let us now prove (\ref{42.isom}). Let first $t$ correspond to a K\"ahler polarization,
ie $t \, : \, 2\pi k < 2\alpha t < (2k+1)\pi, k \in \Z$.
We have, from the proof of Proposition \ref{3.ppol},
$$
- \frac i2 dw \wedge \overline{dw} = \frac{-1}{2\alpha} \sin(2\alpha t) \, dx \wedge dp.
$$
The real part of the matrix in the quadratic form in the exponent in (\ref{42.whh2}) reads,
$$
\frac 1{{2} \sin(\alpha t+\theta)} \left( \begin{array}{ll}
\cos(\theta)\sin(\alpha t) & 0 \\
0 & \sin(\theta) \cos(\alpha t)
\end{array} \right)
$$
and we obtain, using (\ref{1.ip2}), 
$$
||U_{\tau}(\psi_{Y})||^2 = \frac{\pi|\sin(\alpha t+\theta)|}{(\sin(\theta)\cos(\theta)\sin(2\alpha t))^{\frac12}} 2^{\frac12} 
\frac{\sin(2\alpha t)^{\frac12}}{{2}^{\frac12}\sqrt{\alpha}} \pi^{-1}
\frac{|\sin(\theta)|}{|\sin(\alpha t+\theta)|} =1.
$$
For the real polarization corresponding to $\cos(\alpha t) = 0$, since $H_{12}=0$, we have from 
Proposition \ref{3.ppol} that the polarization is given by 
$\langle \frac{\partial}{\partial x}\rangle_\C$. 
It is immediate to check that in this case $\vert\vert U_\tau \psi_Y \vert\vert$ is given by 
a convergent Gaussian integral along $dp$ and the above calculation also gives 
that $U_\tau$ is preserves the norms of the Gaussian states $\psi_Y, Y\in \R^2$. The same is clearly true for the Schr\"odinger polarization where 
$\sin(\alpha t)=0$ and $U_\tau$ is the identity. 
\end{proof}

We now prove Theorem \ref{42.tth}.

\begin{proof}{(of Theorem \ref{42.tth})} Part $c)$ follows directly from Lemma \ref{prequantumWo}. 
To prove $a)$ and $b)$, let $t\in (0,\frac{\pi}{2\alpha})$. 
Since the polarizations $\Pc_{\tau}$ are translation invariant
for every $t$, so that they are preserved by the Hamiltonian vector fields $X_x = -\frac{\partial}{\partial p}$ and $X_p =\frac{\partial}{\partial x}$, the corresponding Hilbert spaces of polarized 
wave functions, ${\mathcal H}_{{\mathcal P}_\tau}$, are preserved by the prequantum operators 
$\rho(x), \rho(p)$. These satisfy the Heisenberg commutation relations and one obtains an irreducible
representation   of the Heisenberg group on $\Hc_{\Pc_{\tau}}$, for every 
such value of $t$. This representation is then given by
\ba
\label{42.heis} \nonumber
V_{P_0}^{(\tau)} &=& e^{-i P_0 \rho(x)} \otimes 1 |_{\Hc_{\Pc_{\tau}}} = 
e^{-P_0 \frac{\partial}{\partial p}-iP_0x}  \otimes 1 |_{\Hc_{\Pc_{\tau}}}   \\
W_{Q_0}^{(\tau)} &=& e^{i Q_0 \rho(p)} \otimes 1 |_{\Hc_{\Pc_{\tau}}} = 
e^{-Q_0 \frac{\partial}{\partial x}}  \otimes 1 |_{\Hc_{\Pc_{\tau}}} .
\ea
The intertwining property is now an easy consequence of (\ref{42.uit}) 
and (\ref{42.heis}). On the overcomplete basis of ${\mathcal H}_{{\mathcal P}_{Sch}}$ given by the 
Gaussian coherent states in (\ref{42.gau}), we have
\bas
\left(W_{Q_0}^{(\tau)} \circ  U_{\tau}\right) (\psi_{(P, Q)})  &=&  \left(e^{-Q_0 \frac{\partial}{\partial x}}  \otimes 1 \right)
 (U_{\tau} (\psi_{(P, Q)})) =  U_{\tau} (\psi_{(P, Q+Q_0)})  = \\
&=&  \left(U_{\tau} \circ W_{Q_0}^{(0)}\right)  \,   (\psi_{(P, Q)}) \\
\left(V_{P_0}^{(\tau)} \circ  U_{\tau}\right)  (\psi_{(P, Q)})  &=&  \left(e^{-P_0 \frac{\partial}{\partial p} - i P_0x}  \otimes 1 \right)
 (U_{\tau} (\psi_{(P, Q)})) =  e^{iP_0 Q} \, U_{\tau} (\psi_{(P+P_0, Q)})  = \\
&=&  \left(U_{\tau} \circ V_{P_0}^{(0)}\right)  \,   (\psi_{(P, Q)}) ,   \quad \\
&& \forall (P,Q) , (P_0,Q_0) \in \R^2 \, {\rm and} \, \,  \forall t: \, 2k\pi \leq  2\alpha t 
\leq (2k+1) \pi, k \in \Z  .
\eas
Note, in particular, that the dense subspace $V\subset L^2(\R,dx)\otimes \sqrt{dx}$, where the 
KSH map $U_\tau$ has already been defined, is preserved by the Heisenberg group.
From the theorem of Stone-Von Neumann, this implies that the KSH map extends to a well-defined operator
$$
U_\tau:L^2(\R,dx)\otimes \sqrt{dx}\to {\mathcal H}_{{\mathcal P}_\tau},
$$ 
which moreover, must be unitary up to a multiplicative constant.
Since, from (\ref{42.isom}),
 $U_\tau$ preserves the norms of the Gaussian states it follows that 
that $U_\tau$ is, in fact, unitary for $t\in (0,\frac{\pi}{2\alpha})$. 
\end{proof}

Finally,

\begin{proof}{(of Theorem \ref{42.tthreal})} 
Equation (\ref{42.uitreal}) follows directly from Lemma \ref{prequantumWo} (See also Remark \ref{tttt}) and direct evaluation of the Fourier transform.
The proof of the fact that we have a unitary intertwining operator for the Heisenberg group
$$
U_{i\frac{\pi}{2\alpha}}: L^2(\R,dx)\otimes \sqrt{dx} \to  {\mathcal H_{{\mathcal P}_{i\frac{\pi}{2\alpha}}}},
$$
where ${\mathcal H_{{\mathcal P}_{i\frac{\pi}{2\alpha}}}}$ is the usual quantization space for the momentum polarization, 
is identical to the one for Theorem \ref{42.tth}.  
\end{proof}

\subsection{General quadratic case}

We will now see that the cases of more general hyperbolic complexifiers are equivalent to the one of the hyperbolic Hamiltonian (\ref{42.ham}) of the previous Section.
We have seen in Proposition \ref{3.ppol} that for $H_{11}=0$ the flow 
of $X_H$ leaves $\Pc_{Sch}$ invariant which implies that the KSH map
is the identity as $\wh H = \rho(H)|_{\Hc_{Sch}}$. We will therefore  assume
from now on that $H_{11} \neq 0$. 

Quadratic Hamiltonians (\ref{3.ham}) with $H_{11} \neq 0$ can
be brought to diagonal form, by a canonical transformation generated by
\be
\label{4.fab}
f(p, x) = \beta xp + \frac {\gamma}2 x^2,
\ee
for appropriate choices of $\beta, \gamma \in \R$. (See also \cite{ArGi}.) Let 
$$
H_s = \exp (s X_{f}) H,\,\,\, s\in \R.
$$

\begin{lemma}\label{eusebio}For $s=1$ and 
$$\beta=\frac12 \log|H_{11}|,\,\, 
\gamma= \frac{H_{12}}{H_{11}} \frac{\beta}{\sinh{\beta}} e^{\beta},
$$ 
we have
$$
H_1 = H_{{s}_{|_{s=1}}}= \frac12 \frac{H_{11}}{|H_{11}|} (p^2 + \det({\rm Hess}(H))x^2).
$$
\end{lemma}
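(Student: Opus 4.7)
The plan is to exploit the fact that, since $f$ is quadratic, $X_f$ is a linear vector field on $\R^2$, so $\exp(sX_f)$ is a linear symplectomorphism that can be integrated explicitly. Then $H_s = H\circ \exp(sX_f)$ remains a quadratic polynomial in $(p,x)$, and the statement at $s=1$ reduces to matching the three coefficients of $p^2$, $px$, and $x^2$.

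First I would compute $X_f$ from (\ref{3.hvf}) with $f_{11}=0$, $f_{12}=\beta$, $f_{22}=\gamma$, getting
\[
X_f = \beta\,x\,\partial_x - (\gamma x + \beta p)\,\partial_p.
\]
The resulting linear ODE system integrates to $x(s) = e^{\beta s}x$ and $p(s) = e^{-\beta s}p - \gamma\,\frac{\sinh(\beta s)}{\beta}\,x$. As in the proof of Proposition \ref{3.ppol}, where $w_\tau = \exp(\tau X_H)(x)$ is the flowed coordinate, the action on functions is by pullback, so $H_s(p,x) = H(p(s), x(s))$.

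Next I would set $s=1$ and expand $H_1 = \tfrac12 H_{11}p(1)^2 + H_{12}p(1)x(1) + \tfrac12 H_{22}x(1)^2$. The coefficient of $p^2$ is $\tfrac12 H_{11}e^{-2\beta}$, which equals $\tfrac12\,\mathrm{sgn}(H_{11})$ precisely when $\beta = \tfrac12\log|H_{11}|$. The coefficient of $px$ is $H_{12} - H_{11}\,e^{-\beta}\gamma\,\tfrac{\sinh\beta}{\beta}$, which vanishes exactly when $\gamma = \tfrac{H_{12}}{H_{11}}\tfrac{\beta}{\sinh\beta}e^{\beta}$, matching the stated values. With these choices, the remaining $x^2$ coefficient is
\[
\tfrac{1}{2}\Big(H_{11}\gamma^2\,\tfrac{\sinh^2\beta}{\beta^2} - 2H_{12}\gamma\,\tfrac{\sinh\beta}{\beta}\,e^\beta + H_{22}e^{2\beta}\Big),
\]
and substituting $\gamma\tfrac{\sinh\beta}{\beta} = \tfrac{H_{12}e^\beta}{H_{11}}$ collapses this to $\tfrac{e^{2\beta}}{2H_{11}}(H_{11}H_{22}-H_{12}^2) = \tfrac12\,\mathrm{sgn}(H_{11})\det(\mathrm{Hess}(H))$, using $e^{2\beta} = |H_{11}|$. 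This is exactly the claimed form of $H_1$.

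There is no deep obstacle here: the proof is essentially a one-parameter linear ODE plus coefficient matching. The only point requiring care is sticking to the correct convention for the action of the flow on functions (pullback versus pushforward); once this is fixed by comparison with the proof of Proposition \ref{3.ppol}, the formulas for $\beta$ and $\gamma$ are forced by killing the $px$ cross term and normalizing the $p^2$ coefficient, and the $x^2$ coefficient then works out automatically.
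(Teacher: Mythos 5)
Your proof is correct and follows essentially the same route as the paper: compute the linear vector field $X_f$, integrate its flow explicitly to get $x_s=e^{s\beta}x$ and $p_s=e^{-s\beta}p-\frac{\gamma}{\beta}\sinh(s\beta)\,x$, and substitute at $s=1$. The paper leaves the final coefficient matching to the reader, whereas you carry it out explicitly; the formulas for $\beta$ and $\gamma$ and the resulting $x^2$ coefficient all check out.
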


\begin{proof}
From 
$$
X_{f} = \beta x \frac{\partial}{\partial x} - (\beta p + \gamma x) \frac{\partial}{\partial p},
$$
we obtain,
$$
x_s = \exp (sX_{f})x = e^{s\beta}x,\,\, p_s=  \exp (sX_{f})p = -\frac{\gamma}{\beta}
\sinh{(s\beta)} x + e^{-s\beta} p,
$$
from which the result follows.
\end{proof}

Notice that, in general, from the fact
that 
\be
\label{4.ht}
H_{1} = \exp(X_f) H
\ee
one can not, in general, deduce unitarity of
the KSH map for  $H_{1}$ from the unitarity of the KSH map 
for $H$. 

\begin{lemma}Let  $\wh H, \wh{{H_1}}, \wh f$, respectively, denote the self-adjoint Weyl quantizations of 
$H, H_{1}, f$ in the Schr\"odinger representation. Then,
\be
\label{4.qht}
\wh{{H_1}} = \exp(i \wh f) \circ \wh H \circ \exp(-i \wh f) .
\ee
\end{lemma}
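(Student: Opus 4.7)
The plan is to interpolate in $s$ and show that both sides of (\ref{4.qht}) are values at $s=1$ of two families obeying the same first-order ODE with the same initial condition. Define
\[
H_s := \exp(sX_f) H, \qquad K(s) := \exp(is\widehat{f}) \, \widehat{H} \, \exp(-is\widehat{f}),
\]
for $s \in \R$. Classically, $H_s$ satisfies
\[
\frac{dH_s}{ds} = X_f H_s = \{f, H_s\}, \qquad H_0 = H,
\]
and since $f$ is quadratic, $X_f$ is linear and $H_s$ remains a quadratic polynomial in $(x,p)$ for every $s$. On the quantum side, $\widehat{f}$ is essentially self-adjoint on Schwartz space $\mathcal{S}(\R)$, so by Stone's theorem the unitaries $\exp(\pm is\widehat{f})$ preserve $\mathcal{S}(\R)$, and on this common invariant dense domain
\[
\frac{d}{ds} K(s) = i[\widehat{f}, K(s)], \qquad K(0) = \widehat{H}.
\]

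The key technical ingredient is the exact symbol-level identity satisfied by Weyl quantization of polynomial symbols of degree at most two:
\[
[\widehat{a}, \widehat{b}] = -i \, \widehat{\{a, b\}},
\]
with no higher-order Moyal corrections. This follows because the antisymmetric part of the Moyal star product expands in odd powers of $\hbar$, and the $\hbar^3$-term already involves three derivatives in each argument, hence annihilates quadratic symbols; equivalently, it can be verified by hand on the basis $\{1, x, p, x^2, xp, p^2\}$. Granted that $K(s) = \widehat{h_s}$ for some quadratic symbol $h_s$, this identity yields
\[
\frac{dK}{ds} = i[\widehat{f}, \widehat{h_s}] = i(-i) \widehat{\{f, h_s\}} = \widehat{\{f, h_s\}},
\]
so $h_s$ satisfies the same first-order linear ODE as $H_s$ with the same initial value. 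Uniqueness in the finite-dimensional space of quadratic polynomials then forces $h_s = H_s$, and evaluating at $s=1$ gives (\ref{4.qht}).

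The remaining bookkeeping is the verification that $K(s)$ is itself the Weyl quantization of some quadratic symbol. This is precisely the content of metaplectic covariance: the unitaries $\exp(is\widehat{f})$ are, up to sign, the metaplectic lifts of the linear symplectic maps $\exp(sX_f)$, and conjugation by a metaplectic operator carries the Weyl quantization of a polynomial symbol $q$ to the Weyl quantization of $q \circ \exp(-sX_f)$. The main obstacle here is not a genuine difficulty but a matter of sign conventions: one has to check that the Poisson-bracket convention $\{f, g\} = X_f g$ used in the paper (under which $[\widehat{x}, \widehat{p}] = -i \widehat{\{x, p\}}$), combined with the placement of the factor $i$ in the generator of $K(s)$, produces matching right-hand sides in the two ODEs so that the sign of the conjugation in (\ref{4.qht}) comes out as stated.
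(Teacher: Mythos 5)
Your proof is correct and follows essentially the same route as the paper: both arguments rest on the exactness of Weyl quantization for quadratic observables, $[\wh a,\wh b]=-i\,\wh{\{a,b\}}$, and on identifying the conjugated operator and $\wh{H_s}$ as solutions of the same first-order ODE with the same initial condition. The extra care you take (invariance of Schwartz space, and the metaplectic-covariance justification that $\exp(is\wh f)\,\wh H\,\exp(-is\wh f)$ stays in the span of Weyl quantizations of quadratics — which could alternatively be deduced from the fact that $\mathrm{ad}_{i\wh f}$ preserves that finite-dimensional space) fills in details the paper leaves implicit, but does not change the argument.
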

\begin{proof}
It is well known that Weyl quantization is exact for observables which are quadratic in $x$ and $p$ (see, 
for example, \cite{U} and \cite{AdPW}) that is $\hat{\{f,g\}}=i[\hat f, \hat g]$ for quadratic $f,g$. But then,
$$
\frac{d}{ds}H_s=\{f,H_s\}
$$
implies
$$
\hat H_s = \exp(i s\wh f) \circ \wh H \circ \exp(-i s\wh f) ,
$$
since this is the only solution of 
$$
\frac{d}{ds}\hat H_s = i [\hat f, \hat H_s],\,\, \hat H_0=\hat H.
$$
\end{proof}

It follows that
\bas
\exp(-s \wh{{H_1}}) =  \exp(i \wh f) \circ \exp(-s\wh H) \circ \exp(-i \wh f).
\eas

We also have 
\begin{lemma} The prequantum operators $\rho(H)$ and $\rho(H_1)$ satisfy
\bas
\exp(s \rho({{H_1}})) &=&  \exp(i \rho(f)) \circ \exp(s\rho(H)) \circ \exp(-i \rho(f)).
\eas
\end{lemma}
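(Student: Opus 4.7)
The plan is to mimic the previous lemma but to exploit a cleaner algebraic fact: unlike Weyl quantization, the Kostant--Souriau prequantization $\rho$ is a genuine Lie algebra homomorphism on \emph{all} of $C^\infty(\R^2)$, not merely on quadratic observables. Concretely, with the paper's conventions one has $[\rho(f),\rho(g)]=i\rho(\{f,g\})$, coming from the curvature $-i\omega$ of $\nabla$ together with $[X_f,X_g]=-X_{\{f,g\}}$; the half-form piece $i\otimes \mathcal L_{X_f}$ is compatible because $[\mathcal L_{X_f},\mathcal L_{X_g}]=\mathcal L_{[X_f,X_g]}$. So the proof reduces to showing $\rho(H_1)=e^{i\rho(f)}\rho(H)e^{-i\rho(f)}$, after which exponentiating the conjugation gives the statement.

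First I would reintroduce the auxiliary family $H_s=\exp(sX_f)H$ from Lemma \ref{eusebio}, which satisfies $H_0=H$ and $\tfrac{d}{ds}H_s=\{f,H_s\}$. Applying $\rho$ and using the homomorphism property turns this into the operator ODE
\[
\frac{d}{ds}\rho(H_s)=\rho(\{f,H_s\})=-i[\rho(f),\rho(H_s)]=[-i\rho(f),\rho(H_s)],
\]
whose unique solution with initial condition $\rho(H_0)=\rho(H)$ is the conjugation
\[
\rho(H_s)=e^{is\rho(f)}\,\rho(H)\,e^{-is\rho(f)}.
\]
Specializing to $s=1$ gives $\rho(H_1)=e^{i\rho(f)}\rho(H)e^{-i\rho(f)}$. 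The lemma then follows from the elementary identity $\exp(s\,U A U^{-1})=U\exp(sA)U^{-1}$ applied with $U=e^{i\rho(f)}$ and $A=\rho(H)$, either by functional calculus or by comparing power series termwise.

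The main obstacle is not algebraic but analytic: $\rho(f)$, $\rho(H)$, $\rho(H_1)$ and their exponentials are all unbounded, so the formal ODE argument and the conjugation-of-exponentials identity must be justified on a common invariant dense domain. A natural choice is the space $V$ of finite linear combinations of the Gaussian coherent states $\psi_Y$ used in Section \ref{shc} (viewed at the prequantum level before imposing the polarization), or alternatively the Schwartz space tensored with smooth half-forms; since $\rho(f)$ and $\rho(H)$ are polynomials of degree two in $x,p$ and first-order differential operators, they preserve such a domain, and the one-parameter groups $e^{is\rho(f)}$ and $e^{s\rho(H)}$ are well defined on it. Once the invariance is checked, the commutator manipulation and the passage from the conjugation of generators to the conjugation of their exponentials are routine.
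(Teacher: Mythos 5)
Your route is genuinely different from the paper's and is structurally sound. The paper proves the conjugation identity by passing to the symmetric gauge $\tilde\theta=\tfrac12(p\,dx-x\,dp)$, in which $\tilde\theta(X_f)=f$ because $f$ is homogeneous of degree two, so that $\tilde\rho(f)$ collapses to the pure Lie derivative $i\mathcal{L}_{X_f}$; its exponential is then literally pullback by the linear flow of $X_f$, and conjugating $\tilde\rho(H)$ by it visibly produces $\tilde\rho(H_s)$. You instead invoke the fact that Kostant--Souriau prequantization is an exact Lie algebra homomorphism on all observables (with the half-form term compatible), convert $\tfrac{d}{ds}H_s=\{f,H_s\}$ into an operator ODE, and solve it by conjugation, exactly parallel to the Weyl-quantization lemma that precedes this one. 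What the paper's route buys is that $e^{i\rho(f)}$ becomes completely explicit (a phase from the gauge change composed with pullback by a linear map), which disposes of the domain questions you defer to your final paragraph; what your route buys is that the key step needs no homogeneity of $f$ and no special gauge.

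There is, however, a sign slip you must repair. The unique solution of $\tfrac{d}{ds}A_s=[-i\rho(f),A_s]$ with $A_0=\rho(H)$ is $A_s=e^{-is\rho(f)}\rho(H)\,e^{is\rho(f)}$, not $e^{is\rho(f)}\rho(H)\,e^{-is\rho(f)}$, so your displayed chain is internally inconsistent as written. Relatedly, your two stated conventions $[\rho(f),\rho(g)]=i\rho(\{f,g\})$ and $[X_f,X_g]=-X_{\{f,g\}}$ cannot both hold with the paper's choices $X_f=f_p\partial_x-f_x\partial_p$, $\nabla 1=ip\,dx$ and $\{f,g\}=X_f(g)$ (the bracket forced by $\tfrac{d}{ds}H_s=\{f,H_s\}$ for $H_s=e^{sX_f}H$): a direct check on $\rho(x)=-i\partial_p+x$ and $\rho(p)=i\partial_x$ gives $[X_f,X_g]=+X_{\{f,g\}}$ and $[\rho(f),\rho(g)]=+i\rho(\{f,g\})$, whence $\rho(H_s)=e^{-is\rho(f)}\rho(H)\,e^{is\rho(f)}$. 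This orientation of the conjugation also comes out of the paper's own gauge argument once unwound, and Proposition \ref{4.prop} is insensitive to replacing $e^{i\rho(f)}$ by its inverse, so the issue is one of consistent conventions rather than a fatal obstruction; but as it stands your proof reaches the target formula only because two sign errors cancel, and you should fix the ODE step and commit to a single bracket convention.
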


\begin{proof}
Consider the following prepotential for $-\omega$, 
$$
\tilde \theta = \frac12 (pdx - xdp) = \theta -\frac12 d(xp),
$$
where $\theta = pdx$.
Note that this corresponds to a different choice of trivializing section for the prequantum bundle, so that the 
expressions of wave functions in both gauges are related by
$$
\psi_{\tilde \theta} = e^{\frac{i}{2}xp}\psi_\theta.
$$
In this new frame, it is easy to verify that the prequantum operator for $f$ is
$$
\tilde \rho(f) = \left(iX_f-\tilde\theta(X_f)+f\right)\otimes 1 + i\otimes {\mathcal L}_{X_f} 
= iX_f\otimes 1 + i\otimes {\mathcal L}_{X_f}.
$$
It is then immediate to verify that
$$
\tilde \rho(H_s) = e^{s{\mathcal L}_{X_f}} \tilde\rho(H) e^{-s{\mathcal L}_{X_f}},
$$
from which the result follows by rewriting the prequantum operators in the original gauge.
\end{proof}

Note that the operators $\exp(i \wh f), \exp(i \rho(f))$ are unitary. 
Then, for the KSH maps we obtain
\ba
\label{4.scon} 
U_{is}^{{{H_1}}} &=& \exp(s \rho({{H_1}})) \circ \exp(-s \wh{{H_1}}) \\  \nonumber
&=&  \exp(i \rho(f)) \circ \exp(s\rho(H)) \circ \exp(-i \rho(f))  \circ  \exp(i \wh f) \circ \exp(-s\wh H) \circ \exp(-i \wh f) , 
\ea
and  we obtain the following sufficient condition for the unitarity of
the KSH map for $H_1$.
\begin{proposition}
\label{4.prop}
Let the conditions (\ref{4.ht}) and (\ref{4.qht}) hold. Then if 
\be
\label{4.suf}
\rho(f)|_{\Hc_{\Pc_{Sch}}} = \wh f \otimes 1 
\ee
 the unitarity of the KSH map
for $H$ is equivalent to the unitarity of the KSH map for $H_1$. 
\end{proposition}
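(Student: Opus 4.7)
The plan is to start from the six-fold factorization (\ref{4.scon}), which already displays $U_{is}^{H_1}$ in terms of $\wh H$, $\rho(H)$, $\wh f$ and $\rho(f)$, and to use hypothesis (\ref{4.suf}) to collapse the middle. Since $\exp(i\wh f)$ preserves $\Hc_{\Pc_{Sch}}$ (because $\wh f$ is self-adjoint on that space), and since (\ref{4.suf}) identifies $\rho(f)|_{\Hc_{\Pc_{Sch}}}$ with $\wh f\otimes 1$, the central pair $\exp(-i\rho(f))\circ\exp(i\wh f)$ reduces to the identity on $\Hc_{\Pc_{Sch}}$. Substituting this into (\ref{4.scon}) gives the clean three-factor identity
$$
U_{is}^{H_1}\ =\ \exp(i\rho(f))\,\circ\, U_{is}^{H}\,\circ\, \exp(-i\wh f)
$$
of operators with domain $\Hc_{\Pc_{Sch}}$.

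Next I would verify that the two outer factors are unitary isomorphisms. The right factor $\exp(-i\wh f)$ is unitary on $\Hc_{\Pc_{Sch}}$ because $\wh f$ is the self-adjoint Weyl quantization of a real quadratic polynomial. For the left factor, condition (\ref{4.suf}) forces $X_f$ to preserve $\Pc_{Sch}$; combined with (\ref{4.ht}) and the flow analysis leading to Lemma \ref{eusebio}, this shows that the complex polarizations $\Pc_{\tau}^{H}$ and $\Pc_{\tau}^{H_1}$ are related by the (complex-time) flow of $X_f$, so the prequantum lift $\exp(i\rho(f))$ intertwines them at the Hilbert-space level. Norm preservation as a map $\Hc_{\Pc_{\tau}^{H}}\to\Hc_{\Pc_{\tau}^{H_1}}$ can then be checked on the dense span $V$ of Gaussian coherent states, using the explicit formulas in Lemmas \ref{42.lemma1} and \ref{prequantumWo}, and extended to the whole Hilbert space by a Stone--von Neumann argument parallel to the one already used in the proof of Theorem \ref{42.tth}.

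Once both outer factors are unitary isomorphisms, the equivalence of the unitarity of $U_{is}^{H}$ and $U_{is}^{H_1}$ is immediate: composing the displayed identity on the left and right by the respective unitary inverses converts either KSH map into the other. The chief obstacle is the rigorous verification that $\exp(i\rho(f))$ is unitary between the two distinct target Hilbert spaces $\Hc_{\Pc_{\tau}^{H}}$ and $\Hc_{\Pc_{\tau}^{H_1}}$, since $\rho(f)$ is unbounded and its imaginary-time exponential must be interpreted on an invariant dense domain before unitarity can be asserted on the full closure.
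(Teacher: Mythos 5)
Your proposal is correct and follows essentially the same route as the paper: substitute (\ref{4.suf}) into (\ref{4.scon}) to cancel the middle pair $\exp(-i\rho(f))\circ\exp(i\wh f)$ on $\Hc_{\Pc_{Sch}}$, obtain $U_{is}^{H_1}=\exp(i\rho(f))\circ U_{is}^{H}\circ\exp(-i\wh f)$, and conclude from the unitarity of the two outer factors. The only difference is that the paper simply asserts the unitarity of $\exp(i\wh f)$ and $\exp(i\rho(f))$ (as real-time quantum and prequantum evolutions of the real quadratic observable $f$), whereas you spell out why the latter is a unitary isomorphism between the two polarized Hilbert spaces; that extra care is welcome but not a departure in method.
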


\begin{proof}
Substituting (\ref{4.suf}) in (\ref{4.scon}) we obtain
\bas
 U_{is}^{{{H_1}}}  &=& \exp(i \rho(f)) \circ \exp(s\rho(H)) \circ \exp(-t\wh H) \circ \exp(-i \wh f)  = \\
&=&  \exp(i \rho(f)) \circ   U_{is}^{{H}} \circ \exp(-i \wh f)  
\eas
and therefore the operator $ U_{is}^{{{H_1}}}$ is unitary if and only if the operator 
$ U_{is}^{{{H}}}$ is unitary.
\end{proof}

\begin{proposition}
\label{4.prop2}
The function $f$ in (\ref{4.fab}) satisfies the condition
(\ref{4.suf}).
\end{proposition}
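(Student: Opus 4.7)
The plan is a direct computational verification: apply the prequantum formula (\ref{2.pqo}) to a polarized section, apply the Weyl quantization formula to the same function $f$, and check that the two operators agree on the Schr\"odinger Hilbert space. The key point will be that the contribution of the half-form Lie derivative exactly reproduces the symmetrization constant of the Weyl product $\wh{xp}$.

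First I would unpack the ingredients for $f(p,x)=\beta xp+\tfrac{\gamma}{2}x^2$. Using the sign conventions of Section \ref{s2}, one reads off
$$
X_f=\beta x\,\partial_x-(\beta p+\gamma x)\,\partial_p,\qquad L_f=p\,\frac{\partial f}{\partial p}-f=p\beta x-\bigl(\beta xp+\tfrac{\gamma}{2}x^2\bigr)=-\tfrac{\gamma}{2}x^2,
$$
and since $\iota_{X_f}dx=\beta x$, one gets $\Lc_{X_f}dx=d(\beta x)=\beta\, dx$, hence $\Lc_{X_f}\sqrt{dx}=\tfrac{\beta}{2}\sqrt{dx}$. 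Evaluating $\rho(f)=iX_f\otimes 1-L_f\otimes 1+i\otimes\Lc_{X_f}$ on a section $\psi(x)\otimes\sqrt{dx}$ of $\Hc_{\Pc_{Sch}}$, the $\partial_p$ piece of $X_f$ annihilates $\psi$, and the three contributions assemble into
$$
\rho(f)\bigl(\psi\otimes\sqrt{dx}\bigr)=\Bigl(i\beta x\,\partial_x\psi+\tfrac{\gamma}{2}x^2\psi+\tfrac{i\beta}{2}\psi\Bigr)\otimes\sqrt{dx}.
$$

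Next I would compute the Weyl quantization $\wh f$ on the right-hand side. The Schr\"odinger operators in the paper's conventions are $\hat x=x$ and $\hat p=i\,\partial_x$ (as can be checked by applying $\rho(p)$ to a polarized section using $X_p=\partial_x$, $L_p=0$ and $\Lc_{X_p}dx=0$, which is also consistent with $\wh H=\tfrac12(-\partial_x^2-\alpha^2 x^2)$ in (\ref{42.qhams})). Weyl ordering gives
$$
\wh{xp}=\tfrac12(\hat x\hat p+\hat p\hat x)=\tfrac12\bigl(ix\,\partial_x+i\,\partial_x\circ x\bigr)=ix\,\partial_x+\tfrac{i}{2},
$$
and therefore
$$
\wh f=\beta\,\wh{xp}+\tfrac{\gamma}{2}\hat x^2=i\beta x\,\partial_x+\tfrac{i\beta}{2}+\tfrac{\gamma}{2}x^2.
$$
Comparing with the expression obtained for $\rho(f)$, the two operators agree on $\Hc_{\Pc_{Sch}}$, so $\rho(f)|_{\Hc_{\Pc_{Sch}}}=\wh f\otimes 1$, which is (\ref{4.suf}).

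There is no real obstacle, only a bookkeeping point to highlight: the ``$+\tfrac{i\beta}{2}$'' that distinguishes the Weyl product from the naive product $\hat x\hat p$ arises on the prequantum side precisely from the half-form term $i\Lc_{X_f}\sqrt{dx}$. Without the half-form correction the identity would fail, and one would only recover normal-ordered (rather than Weyl-ordered) quantization of the cross term $\beta xp$. This is the only subtlety in the calculation.
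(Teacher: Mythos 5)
Your computation is correct and follows essentially the same route as the paper: evaluate $X_f$, $L_f=-\tfrac{\gamma}{2}x^2$ and $\Lc_{X_f}\sqrt{dx}=\tfrac{\beta}{2}\sqrt{dx}$, restrict $\rho(f)$ to polarized sections, and identify the result with the Weyl-ordered $\wh f$. You are merely more explicit than the paper in spelling out $\wh{xp}=i x\,\partial_x+\tfrac{i}{2}$ and in flagging that the half-form term supplies exactly the Weyl symmetrization constant, which is a helpful but not substantively different addition.
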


\begin{proof}
We have
$$
\Lc_{X_{f}} \, dx = \beta dx , 
$$
and therefore, taking into account the half-form correction,
\ba \nonumber
\rho(f)|_{\Hc_{\Pc_{Sch}}} &=& \left[ \left(iX_{f} + \frac{\gamma}2 x^2\right) \otimes 1
+i \otimes \Lc_{X_{f}}\right]|_{\Hc_{\Pc_{Sch}}} \\
&=&  \left[i \beta x \frac{\partial}{\partial x} +\frac i2 \beta + \frac {\gamma}2 x^2\right] \otimes 1 = \wh{f}.
\ea
The last equality corresponds to the (Weyl) symmetric ordering and satisfies (\ref{4.qht}) for all quadratic $H$.
\end{proof}

\begin{remark}
Notice that quadratic functions, $f= \delta p^2 + \beta px + \gamma x^2$,  satisfy (\ref{4.suf}) if and only if
$\delta =0$. This is why one can not prove unitarity of the KSH map for e.g. $p^2$ or $p^2-x^2$ from the
corresponding trivial unitarity of the KSH maps for $x^2$ and $xp$, respectively.  As we mentioned above,  in the latter cases the KSH 
maps are equal to the identity.
\end{remark}


\begin{remark}Given the Propositions \ref{4.prop} and \ref{4.prop2} the unitarity of the KSH map 
for quadratic observables (\ref{3.ham}) with $H_{11} \neq 0$ is reduced
to the unitarity of the KSH map  for one of the Hamiltonians: 
\be
\label{4.canc}
\begin{array}{llll}
(i) & \frac {p^2}2   \\
(ii) & \frac {p^2+\alpha^2 x^2}2,      \\
(iii) & \frac {p^2-\alpha^2 x^2}2  , \alpha\neq 0, 
\end{array}
\ee
where, with no loss in generality, we have taken $H_{11}>0$.
The KSH for the free particle (i) in (\ref{4.canc}) coincides with 
the Segal-Bargman-Hall transform and is known to be unitary. The elliptic case, $(ii)$, is treated in
\cite{Es}.
\end{remark}

\section{Relation to the classical Segal-Bargmann transform}
\label{2geod}

Consider the Hamiltonian for a free particle on the symplectic plane
$$
H_E = \frac12 p^2,
$$
with Hamiltonian vector field 
$$
X_{H_E}= p\frac{\partial}{\partial x}.
$$

The corresponding Hamiltonian flow in imaginary time $\tilde \tau = i\tilde t, \tilde t\in \R,$ produces the complex structure $\tilde J_{\tilde \tau}$ on the plane given by the following holomorphic coordinate
$$
\tilde w_{\tilde \tau}= {\rm exp}\,(\tilde \tau X_{H_E}) (x) = x + i\tilde t p.
$$

Even though the Hamiltonian flows of $H_E$ and of the hyperbolic Hamiltonian $H=\frac12 (p^2-\alpha^2x^2)$ are quite different, by comparing with (\ref{3.xit}) we see that this holomorphic coordinate correponds to the same complex structure on the plane as defined by the Hamiltonian flow of $H$ in complex time $\tau = it, t\in \R$, such that
\be
\label{timerep}
\tan (\alpha t) = \alpha \tilde t.
\ee

In fact, if $\tau, \tilde \tau$ satisfy this relation we have 
$$
w_\tau = \cos(\alpha t) x + \frac{i}{\alpha} \sin (\alpha t) p = \cos(\alpha t) \tilde w_{\tilde \tau},
$$
so that these two holomorphic coordinates on the plane, where one is just a rescaling of the other, define the same complex structure, that is $J_\tau = \tilde J_{\tilde \tau}$.

Note that, therefore, the same K\"ahler structure (that is, up to equivalence) on the symplectic plane can be described in two different ways, as we now explain. Recall 
that the path of K\"ahler  structures obtained by 
Hamiltonian flow in imaginary time is a geodesic path with respect to the Mabuchi affine connection\footnote{In the noncompact case of the plane one does not have a Mabuchi metric but the corresponding affine connection on the space of K\"ahler potentials still makes sense.} defined on the space of K\"ahler potentials \cite{Do99a,MN15}. 
The velocity of the geodesic is related to the Hamiltonian function generating it. 

To the Hamiltonian flows of $H$ and $H_E$ in imaginary times, $\tau = it, \tilde \tau = i\tilde t, \,t,\tilde t\in \R_+$, respectively, we can associate two families of diffeomorphisms of the symplectic plane 
$$
\Phi_\tau (x,p) = (\cos(\alpha t) x, \frac{1}{\alpha} \sin(\alpha t)p),\,\,\, \tilde \Phi_{\tilde \tau} (x,p) =(x, \tilde t p).
$$
If $J_{st}$ denotes the standard complex structure on the plane, corresponding to the usual holomorphic coordinate $z=x+ip$, we then have maps
$$
(\R^2,\omega_\tau,J_{st})\stackrel{\Phi_\tau}{\rightarrow} (\R^2,\omega,J_\tau)
$$
and
$$
(\R^2,\omega_{\tilde\tau},J_{st})\stackrel{\tilde \Phi_{\tilde\tau}}{\rightarrow} (\R^2,\omega,\tilde J_{\tilde\tau}).
$$

Here, on the right-hand side, $J_\tau, \tilde J_{\tilde\tau}$ are the complex structures described above and which together with the original symplectic form $\omega$ define K\"ahler structures on the plane. This is the ``symplectic picture'' where $\omega$ is kept fixed during the evolution in imaginary time. The Moser maps $\Phi_\tau, \tilde \Phi_{\tilde\tau}$ pull-back these to equivalent K\"ahler structures in the ``complex picture'' where the complex structure $J_{st}$ is kept fixed \cite{Do99a,MN15}.

As we noted above, for $\tan (\alpha t) = \alpha \tilde t$ we have $J_\tau = \tilde J_{\tilde\tau}$. In fact, for $\tau,\tilde\tau$ satisfying this relation it is easy to check that 
$\Phi_\tau^{-1}\circ \tilde \Phi_{\tilde\tau}$ is just a rescaling of the $(x,p)$ coordinates which is an automorphism of $J_{st}$. 

Remarkably, we therefore have the same Mabuchi geodesic described by the flow of $H_E$ in imaginary time $\tilde\tau$ for $\tilde t\in (0,+\infty)$, or by the flow of $H$  in imaginary time $\tau$ for $t\in (0,\frac{\pi}{2\alpha})$. This is possible due to the fact that the complex structures these flows generate have continuous groups of automorphisms as we described above.

The lift of the imaginary time Hamiltonian flow of $H$  to the bundle of Hilbert spaces of quantum states is given by the KSH maps $U_\tau$
 that were studied in Section \ref{shc}. In the case of the free particle Hamiltonian $H_E$, this lift is well-known to correspond to a family of  Segal-Bargmann transforms
 \cite{Ha02, FMMN, KW}.  
The fact that these two Hamiltonians generate the same Mabuchi geodesic, together with the intertwining properties of the corresponding KSH maps relative to the Heisenberg group, implies that the lifts of their imaginary time Hamiltonian flows to the bundle of quantum states are the same. Therefore, the KSH maps $U_\tau$ are really Segal-Bargmann transforms disguised by the realization of the Mabuchi geodesic in which it is generated by the hyperbolic Hamiltonian. Let $SB_{\tilde\tau}, \tilde\tau = i\tilde t, \tilde t>0$ denote the family of Segal-Bargmann transforms of \cite{Ha02, FMMN, KW} which can be written as \cite{KMN13, KMN14}
 $$
 SB_{\tilde \tau} = {\rm exp}\,(+\tilde t\rho(H_E)) \circ {\rm exp}\,(-\tilde t{\widehat H_E}),
 $$
 where $\tilde \tau = i\tilde t, \tilde t\geq 0$, the prequantum operator associated to $H_E$ is 
 $$\rho(H_E)=\left( (iX_{H_E}-H_E)\otimes 1 + i\otimes {\mathcal L}_{X_{H_E}}\right)$$  and ${\widehat H_E} = -\frac12 \frac{\partial^2}{\partial x^2}.$
 Notice that $SB_{\tilde \tau}$ can be obtained from Theorem \ref{42.tth} by taking $\alpha\to 0$ in the expression for $U_{\tilde\tau}$. The classical Segal-Bargmann transform 
 is obtained for $\tilde t=1.$
 
 We then have the remarkable operator identity

 \begin{theorem}\label{kmnt} \cite{KMNT}
 For $\tilde \tau = i\tilde t, \tilde t\geq 0$, let $\tau =it, t\in [0,\frac{\pi}{2\alpha})$, be such that $\tan{(\alpha t)}=\alpha \tilde t.$ 
 Then,
 $$
 SB_{\tilde \tau}  =U_{\tau} .
 $$ 
 Therefore, the Segal-Bargmann transforms can be factorized as the composition of the unbounded time-evolution operators associated to the prequantum and Schr\"odinger quantizations of the hyperbolic Hamiltonian $H$, 
 as in (\ref{factori}).
 \end{theorem}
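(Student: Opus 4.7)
The plan is to reduce the identity $SB_{\tilde\tau} = U_\tau$ to an application of the Stone--von Neumann uniqueness theorem, using the structural observation already established in the paragraphs preceding the statement: the flows of $H_E$ and $H$ in imaginary time trace out the same Mabuchi geodesic in the space of K\"ahler structures on $(\R^2,\omega)$ whenever $\tan(\alpha t) = \alpha\tilde t$.

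First, I would verify that the two K\"ahler polarizations coincide as polarizations of $(\R^2,\omega)$. From $\tilde w_{\tilde\tau} = x + i\tilde t\, p$ and $w_\tau = \cos(\alpha t)\, x + i(\sin(\alpha t)/\alpha)\, p$, the identity $w_\tau = \cos(\alpha t)\,\tilde w_{\tilde\tau}$ (which is exactly the content of $\tan(\alpha t) = \alpha\tilde t$) implies $J_\tau = \tilde J_{\tilde\tau}$, hence $\mathcal{P}_\tau = \tilde{\mathcal{P}}_{\tilde\tau}$. The two holomorphic coordinates differ only by a positive real rescaling, which is an automorphism of the complex structure, so $\mathcal{H}_{\mathcal{P}_\tau} = \mathcal{H}_{\tilde{\mathcal{P}}_{\tilde\tau}}$ after the corresponding identification of half--forms $\sqrt{dw_\tau} = \sqrt{\cos(\alpha t)}\,\sqrt{d\tilde w_{\tilde\tau}}$ (a constant factor, harmless for unitarity questions).

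Second, I would observe that both $SB_{\tilde\tau}$ and $U_\tau$ are unitary intertwiners between the Schr\"odinger representation of the Heisenberg group on $\mathcal{H}_{\mathcal{P}_{Sch}}$ and the polarized representation on the common target $\mathcal{H}_{\mathcal{P}_\tau}$. For $U_\tau$ this is part (a) of Theorem \ref{42.tth}. For $SB_{\tilde\tau}$ it is classical (see \cite{Ha94, FMMN, KW}) and can also be recovered from Theorem \ref{42.tth} in the degenerate limit $\alpha \to 0$. Since the Heisenberg representation on $\mathcal{H}_{\mathcal{P}_\tau}$ is irreducible (as is each of the relevant target representations), the Stone--von Neumann theorem yields a constant $c(t) \in U(1)$ such that $SB_{\tilde\tau} = c(t)\, U_\tau$.

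Third, I would pin down the scalar $c(t)$ by evaluating both operators on a single convenient state, e.g.\ the vacuum Gaussian $\psi_{(0,0)}$. Using formula (\ref{42.uit}) of Theorem \ref{42.tth} at $Y=0$ and the corresponding formula for $SB_{\tilde\tau}(\psi_{(0,0)})$, and substituting $\tan(\alpha t) = \alpha\tilde t$ together with $w_\tau = \cos(\alpha t)\tilde w_{\tilde\tau}$, one checks directly that both Gaussian prefactors and both quadratic exponents agree, so $c(t) = 1$. The main technical step — and the only place where real calculation is required — is matching the prefactor $|\sin\theta/\sin(\theta+\alpha t)|^{1/2}$ and the covariance factor in the exponent of (\ref{42.uit}) against the analogous expressions for $SB_{\tilde\tau}$ on the rescaled coordinate; this is a finite trigonometric identity in $t$ and $\tilde t$ related by $\tan(\alpha t) = \alpha\tilde t$, and its success is guaranteed a priori by the Stone--von Neumann reduction above, leaving only the overall phase to confirm.
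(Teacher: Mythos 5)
Your proposal is correct and follows essentially the same route the paper indicates: the discussion in Section \ref{2geod} argues precisely that the two flows trace the same Mabuchi geodesic (so the target polarizations and Hilbert spaces coincide) and that the Heisenberg intertwining properties force the two unitaries to agree, with the detailed verification deferred to \cite{KMNT}. Your plan to fix the residual $U(1)$ constant by comparing the images of the vacuum Gaussian under (\ref{42.uit}) and under the explicit Segal--Bargmann kernel is the natural way to complete that sketch.
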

 
 At the endpoints of the intervals of imaginary time, $t\to\frac{\pi}{2\alpha}$ and $\tilde t \to +\infty$ respectively, the transforms $U_\tau$ and $SB_{\tilde \tau}$ become the usual Fourier transform. For the family $U_\tau$ this is Theorem \ref{42.tthreal} above and for the family $SB_{\tilde\tau}$ this is proved in \cite{KW}. The application to holomorphic fractional Fourier transforms, as well as the proof of Theorem \ref{kmnt}, will appear in \cite{KMNT}.


\section*{Acknowledgements}
We would like to thank T.Baier for useful discussions.
The authors JM and JPN were partially
by 
FCT/Portugal through the projects UID/MAT/\-04459/2013, PTDC/MAT-GEO/3319/2014, PTDC/MAT-OUT/28784/2017 and by the (European Cooperation in Science and 
Technology) COST Action MP1405 QSPACE. The authors also 
thank the generous support from the Emerging Field Project on Quantum Geometry from Erlangen--N\"urnberg University, 
where this project was initiated.

\providecommand{\bysame}{\leavevmode\hbox to3em{\hrulefill}\thinspace}


\begin{thebibliography}{999999999}    
                                                                                    
\bibitem[AnGa]{AnGa} { J. Andersen and N. Gammelgaard,  \emph{The Hitchin--Witten Connection and
Complex Quantum Chern-Simons Theory}, arXiv:1409.1035.}

\bibitem[ArGi]{ArGi} { V.I. Arnold and A. Givental,  \emph{Symplectic geometry}, in ``Dynamical Systems IV'', ed. V.I. Arnold and S.P. Novikov, Springer 2001.}

\bibitem[AdPW]{AdPW} { S. Axelrod, S. Della Pietra and E. Witten,
\emph{Geometric quantization of Chern-Simons Gauge Theory}, J. Diff. Geom. 
\textbf{33} (1991) 787--902. }

\bibitem[BLU]{BLU}{ D.~Burns, E.~Lupercio and A.~Uribe, \emph{The exponential map of the complexification of {\emph Ham} in the real-analytic case},  
arXiv:1307.0493.}



\bibitem[Do99a]{Do99a}{  S. K.  Donaldson, \emph{Symmetric spaces, K\"ahler geometry and Hamiltonian dynamics}, American
Math. Soc. Trans., Series 2 \textbf{196} (1999), 13--33.}

\bibitem[Do99b]{Do99b}{  S. K.  Donaldson, \emph{Moment maps and diffeomorphisms}, Asian J. Math.
\textbf{3} (1999), 1--16.}

\bibitem[Dr]{Dr} { B. Driver, \emph{On the Kakutani-It\^o-Segal-Gross and Segal-Bargmann-Hall isomorphisms}, J. Funct. Anal. \textbf{133} (1995), 69--128.}


\bibitem[Es]{Es}{ J. Esteves,}\emph{ in preparation}.


\bibitem[EMN]{EMN}{ J. Esteves, J. Mour\~ao and J. P. Nunes, 
\emph{Quantization in singular real polarizations: K\"ahler renormalization, Maslov correction and pairings},
 J. Phys. A:Math. Theor. \textbf{48} (2015) 22FT01.}

\bibitem[FMMN]{FMMN}{ C. Florentino, P. Matias, J. Mour\~ao and J. P. Nunes, \emph{On the BKS pairing for 
K\"ahler quantizations of the cotangent bundle of a Lie group}, J. Funct. Anal. \textbf{234} (2006) 
180--198.}

\bibitem[GT07a]{GT07a}{  K. Giesel and T. Thiemann, \emph{Algebraic quantum gravity (AQG) II. Semiclassical Analysis},
Class. Quant. Grav. \textbf{24} (2007) 2499--2564.}

\bibitem[GT07b]{GT07b}{  K. Giesel and T. Thiemann, \emph{Algebraic quantum gravity (AQG) III. Semiclassical Perturbation Theory},
Class. Quant. Grav. \textbf{24} (2007) 2565--2588.}

\bibitem[GKRS]{GKRS}{ E.-M. Graefe, H. J.  Korsch, A. Rush, R. Schubert,
\emph{
Classical and quantum dynamics in the (non-Hermitian) Swanson oscillator},
 J. Phys. A \textbf{48} (2015), no. 5, 055301.}

\bibitem[GrSc12]{GS12}   { E.-M. Graefe, R. Schubert, \emph{Complexified coherent states and quantum evolution with non-Hermitian Hamiltonians},
J. Phys. A  {\bf 45} (2012) 244033.}

\bibitem[GuSt91]{GS91}{ V.~Guillemin and M.~Stenzel, \emph{Grauert tubes and
 the homogeneous {M}onge-{A}mp{\`e}re equation}, J. Diff. Geom. \textbf{34}
 (1991) no.~2, 561--570. }

\bibitem[Ha94]{Ha94}{ B.~C. Hall, \emph{The {S}egal-{B}argmann
``coherent-state'' transform for Lie groups}, J. Funct. Anal. \textbf{122}
(1994) 103--151. }

\bibitem[Ha02]{Ha02}{ B.~C. Hall, \emph{Geometric quantization and the
generalized {Segal--Bargmann} transform for {Lie} groups of compact type},
Comm. Math. Phys. \textbf{226} (2002) 233--268. }

\bibitem[Ha06]{Ha06}{ B.~C. Hall, \emph{The range of the heat operator}, in \emph{The ubiquitous heat kernel}, Contemp. Math. (2006) \textbf{398} 203--231.}

\bibitem[HK11]{HK11}{ B.~Hall and W.~D. Kirwin, \emph{Adapted complex
structures and the geodesic flow}, Math. Ann. \textbf{350} (2011) 455--474. }

\bibitem[HK15]{HK15}{ B.~Hall and W.~D. Kirwin, \emph{Complex structures
adapted to magnetic flows},  J. Geom. Phys. \textbf{90} (2015), 111--131.}

\bibitem[He]{He} { S. Helgason, \emph{Differential geometry, Lie groups and symmetric
spaces}, Associated Press, 1978.}

\bibitem[KMN13]{KMN13}{ W.~D. Kirwin, J.~M. Mour{\~ao}, and J.~P. Nunes,
\emph{Complex time evolution in geometric quantization and generalized
coherent state transforms}, J. Funct. Anal.  \textbf{265} (2013)
1460--1493.}

\bibitem[KMN14]{KMN14}{ W.~D. Kirwin, J.~M. Mour{\~ao}, and J.~P. Nunes,
\emph{{C}omplex time evolution and the {M}ackey-{S}tone-{V}on {N}eumann
theorem}, J. Math. Phys., \textbf{55} (2014), 102101.}

\bibitem[KMN16]{KMN16}{ W. Kirwin, J.~Mour\~ao and J.~P. Nunes,
\emph{Complex symplectomorphisms and pseudo--K\"ahler islands in the 
quantization of toric manifolds},  Math. Ann. \textbf{364} (2016), no. 1-2, 1--28.}

\bibitem[KMNT]{KMNT}{W.~Kirwin, J.~Mour\~ao, J.~P.Nunes and T.~Thiemann, \emph{Holomorphic fractional Fourier transforms}, preprint 2018.}

\bibitem[KW]{KW}{ W.~D. Kirwin and S.~Wu, \emph{Geometric
quantization, parallel transport and the {Fourier} transform}, Comm. Math.
Phys. \textbf{266} (2006) 577--594. }


\bibitem[LS]{LS}{ L.~Lempert and R.~Sz{\"o}ke, \emph{Global solutions of
 the homogeneous complex Monge-Amp{\`e}re equation and complex structures on
 the tangent bundle of riemannian manifolds}, Math. Annalen \textbf{319}
 (1991)  no.~4, 689--712. }

\bibitem[MN15]{MN15} J.~Mour\~ao and J.~P. Nunes, \emph{On complexified Hamiltonian flows and 
geodesics on the space of K\"ahler metrics},  Int. Math Res. Not. (2015), No. 20, 10624-10656.


\bibitem[Th96]{Th96}T.~Thiemann, \emph{Reality conditions inducing
transforms for quantum gauge field theory and quantum gravity},
Class.Quant.Grav. \textbf{13} (1996) 1383--1404. 

\bibitem[Th07]{Th07}T.~Thiemann, \emph{Modern canonical quantum
general relativity}, Cambridge  University Press, Cambridge, 2007.

\bibitem[U]{U} J.~Underhill, \emph{Quantization on a manifold with connection}, J. Math. Phys. \textbf{19} 
(1978) 1932--1935.

\bibitem[Wo]{Wo} N.~Woodhouse, \emph{Geometric quantization},
Oxford University Press, Oxford, 1991. 

\end{thebibliography}
\end{document}